
\documentclass[submission,copyright,creativecommons]{eptcs}
\usepackage{breakurl}             
\usepackage{underscore}           

%

\usepackage{amsthm,amsmath,amssymb}

\usepackage{url}
\usepackage{graphicx}
\frenchspacing
\usepackage{etoolbox}
\usepackage{inputenc}
\usepackage[T1]{fontenc}
\usepackage{bussproofs}
 
\usepackage{commandsEmiliano}
\usepackage[linesnumbered,ruled]{algorithm2e}
\DontPrintSemicolon
\AtBeginEnvironment{align}{\setcounter{equation}{0}}

\newtheorem{theorem}{Theorem}

\newtheorem{lemma}{Lemma}

\newtheorem{proposition}[theorem]{Proposition}
\newtheorem{definition}[theorem]{Definition}

\title{Exploiting Belief Bases \\for Building Rich Epistemic Structures}
\author{Emiliano Lorini
\institute{IRIT, CNRS, Toulouse University\\ Toulouse, France}
\email{Emiliano.Lorini@irit.fr}
 }

\begin{document}
\maketitle

\begin{abstract}
We introduce a semantics for epistemic logic exploiting a belief base abstraction. 
Differently from existing Kripke-style semantics for epistemic logic in which the notions of possible world and epistemic alternative are primitive, 
in the proposed semantics they are non-primitive but are defined from the concept of belief base. 
We show that  this semantics allows us to define the universal epistemic model in a simpler and 
more compact way than  existing  inductive constructions of it. 
We provide (i) a number of semantic equivalence results
for both the basic epistemic
language with  `individual belief'
operators and its extension by 
the notion of 
 `only believing',
 and (ii)  a lower bound complexity result for epistemic logic model checking
 relative to the universal epistemic model.
\end{abstract}

\section{Introduction}

Type spaces were introduced by Harsanyi  \cite{Harsanyi1967} in order to formally represent   higher-order probabilistic
beliefs
of rational players in strategic situations. 
The notion of \emph{universal}
type space
was studied in detail by Mertens \& Zamir \cite{UnivZamir}:
it is defined to be the ``largest'' type space
 which contains all possible states of the world
 as well
 as all possible belief hierarchies of the players.
Mertens \& Zamir showed that, under certain topological assumptions, every
 Harsanyi type space can be mapped to the universal type space by a morphism which preserves
 the
 current state of the world as well as the players' belief hierarchies. 
 Alternative constructions
of the universal type space under different topological assumptions
can be found in \cite{UnivBranden,Heifetz93Univ}. Heifez \& Samet \cite{HeifetzSamet98Univ} proved a variant of Mertens \& Zamir's result
for the general measure-theoretic case.
Battigalli \& Siniscalchi
\cite{BattigalliSiniscalchiUniv}
extended the universal
type space construction
to conditional probabilistic beliefs. More recently, Bjorndahl  \& Halpern
 \cite{BjorndahlHalpernTARK} provided a 
logical analysis of the probabilistic structure
of the universal type space.\footnote{An analysis
of the relationship between Harsanyi's type
spaces and the multi-relational Kripke semantics
of epistemic logic can be found in \cite{LoriniGaleazzi}. }

A qualitative version
of the universal type space
was introduced by 
Fagin et al. \cite{FaginUniv2} (see also \cite{FaginUniv1}).
Specifically, Fagin et al. 
provided 
a
construction
of the ``largest''
(or  ``universal'')
epistemic
model
for the epistemic logic S5$^n$
which contains 
all possible knowledge hierarchies of the agents
in the system.
They studied its relationship
with the standard multi-relational semantics of epistemic logic \cite{Fagin1995}
both for the
basic epistemic language
with 
 `individual knowledge'
 operators
 and for its extension  by common knowledge
 operators.

Both constructions of the universal epistemic model
 in its probabilistic version \`a la Mertens \& Zamir and
 in its qualitative version  \`a la Fagin et al. are inductive.
One has  to define first the set of possible states of the world.
Then, the agents' first-order beliefs about the states of the world
 are defined.
 In the third step, one has to define 
 the agents' second-order beliefs about the states of nature
 and the agents' first-order beliefs,
 and so on so forth. More generally, in order to define an agent's $k+1$-order
 belief
 one has to define first the agents'  $k$-order
 beliefs.

 The general aim of the present paper is to offer
 a simple and  compact  construction
 of the universal
 epistemic model
 that does not require an inductive
 construction
 of the agents' belief hierarchies, as  in 
 Mertens \& Zamir's 
 and
  Fagin et al.'s definitions. 
 Our construction
  of the universal
 epistemic model
  is based on 
 a semantics for  epistemic logic exploiting a belief base abstraction 
 which was recently introduced in  \cite{LoriniAAAI2018}.
 Differently from existing 
 multi-relational Kripke semantics 
  for epistemic logic in which the notions of possible world and epistemic alternative are primitive, in the belief
 base semantics they are non-primitive but are defined from the concept of belief base.
Specifically, in this semantics it is assumed that 
at a given state $s$
agent $i$ considers state $s'$  possible 
if and only if $s'$ satisfies all formulas
that are
included in agent $i$'s belief base at $s$.

The initial motivation for introducing such a semantics 
was to bridge two traditions
that have rarely talked to each other in the past. On the one hand, we have epistemic logic:
it started in the 60s with the seminal work of Hintikka \cite{Hintikka} on the logics of knowledge and belief, it was extended to the multi-agent setting at the end of 80s \cite{Fagin1995,Meyer1995} and
then furtherly developed during the last 20 years, the period of the ``dynamic turn'', with growing research on dynamic epistemic logic \cite{DitmarschHoekKooi07}.
On the other hand, we have syntactic approaches to knowledge representation and reasoning mainly
proposed in the area of artificial intelligence (AI). The latter includes, for instance, work on belief base and knowledge base revision \cite{HanssonJSL,Han99,BDPW02},
belief base merging \cite{KoniecznyPerez},
input-output logic \cite{MakinsonTorreIOL}, as well as more recent work on the so-called
``database perspective'' to the theory of intention by \cite{ShohamJPL}.   All these approaches defend the idea that right level of abstraction for
modeling rational agents  is the ``belief base'' (or ``knowledge base'') level
whereby
the agent is identified with the set of facts that she believes (or knows).

The paper is organized as follows.
In Section \ref{BelBasPre},
we present the belief base semantics for epistemic logic,
as defined in \cite{LoriniAAAI2018,LoriniRomeroAAMAS2019}.
We show how the basic  
language with 
 `individual belief'
 operators
 can be interpreted on this semantics.
 We also show how the semantics can be easily incorporate the
 belief correctness assumption, thereby allowing us to model
 knowledge instead of belief.
 We show that the belief base semantics
 is equivalent to the standard multi-relational
 Kripke semantics,
 in the sense that they lead to the same set
 of validities. Section \ref{univ1}
 is the core of the paper:
 we define the universal epistemic
 model with the help of the belief base semantics. 
As emphasized above,
our definition is simpler and more compact than 
existing definitions
of the universal epistemic model, since it does not require an inductive construction
of the agents' belief hierarchies. 
We show that, as far as 
the basic epistemic language
with
 `individual belief'
 operators is concerned,
 the set
 of validities relative
 to the universal
 epistemic model
 and the set of validities
 relative
 to the generic belief
 base semantics are the same. 
 In Section \ref{extensions},
 we 
 extend the basic epistemic
 language by the notion of
  `only believing'. We show that the equivalence 
  between the two semantics
  does not hold anymore   in the context of this more expressive language:
  the universal epistemic model
  has more validities than the generic
  belief base semantics. 
  Section \ref{univtypespace} is devoted to the comparison
  between Fagin et al.'s inductive construction 
  of the universal
  epistemic model
  and the construction of Section \ref{univ1} exploiting belief bases. 
  In particular,
  we show that,
  from the point of the basic
  epistemic language and of its extension by
  the notion of
    `only believing', there is no difference between the two constructions,
    as they give rise to the same set of validities. 
    In Section \ref{model checking},
    we provide a compact formulation of epistemic
    logic model checking which exploits  the definition
    of    the universal
    epistemic model
    given in Section \ref{univ1}.
    We show that 
    such a compact formulation
    makes epistemic logic model checking
     PSPACE-hard,
    whereas  standard
    epistemic logic model checking
    relative to the multi-relational Kripke semantics
    is polynomial   in the size of the model and the length of the formula.
    In Section \ref{conclude}, we conclude.

\section{Belief base semantics for epistemic logic }\label{BelBasPre}

In this section,
we present a
semantics for epistemic logic
($\logic$)
in which
the accessibility relations 
  in an epistemic model
are not primitive but
they are defined from the primitive
concept
of multi-agent belief base.
The semantics was first introduced in \cite{LoriniAAAI2018}.

\subsection{Multi-agent belief base models}

Assume
a countably  infinite set  of atomic propositions $\PROP = \{p,q, \ldots \}$
and
 a finite set of agents $\AGT = \{ 1, \ldots, n \}$.
 
We define the language $\langminus (\PROP, \AGT)$
by the following grammar in Backus-Naur Form (BNF):
\begin{center}\begin{tabular}{lcl}
  $\alpha$  & $\bnf$ & $ p  \mid \neg\alpha \mid \alpha_1 \wedge \alpha_2  \mid
  \expbel{i} \alpha
                        $
\end{tabular}\end{center}
 where $p$ ranges over $\PROP$
and $i$ ranges over $\AGT$.
 $\langminus(\PROP, \AGT)$
is the language for representing
explicit beliefs of multiple agents.
For simplicity,  we sometimes write
$\langminus$  instead
of $\langminus(\PROP, \AGT)$, when the context is unambiguous.
The formula $\triangle_{i} \alpha$ can be read as ``agent  $i$ explicitly believes that $\alpha$ is true'' or ``$\alpha$ is in agent $i$'s belief base''.
In this language,
we can represent higher-order explicit beliefs,
 for example $\triangle_{i} \triangle_{j} \alpha$ express the fact that agent $i$ explicitly believes that agent $j$ explicitly believes that $\alpha$ is true.

\begin{definition}[Multi-agent belief base]\label{MAB}
A multi-agent belief base  (MBB)  
 is a tuple $ B = (B_1, \ldots, B_n, \states )$
where (i) for every $i  \in \AGT$, $B_i \subseteq\langminus  $
is agent $i$'s belief base,
 and
(ii) $ \states \subseteq \PROP$ is the actual state.
The class of  MBBs  is denoted by $\classbelbasesimple$.
\end{definition}

Formulas
of the language $\langminus$
are interpreted relative to
MBBs as follows.

\begin{definition}[Satisfaction relation for formulas in $\langminus$]\label{satrel1}
Let $ B = (B_1, \ldots, B_n,  \states ) \in \classbelbasesimple$. Then,
the satisfaction relation $\models$
between $B$ and formulas in $ \langminus $
is defined as follows:
\begin{eqnarray*}
B \models p & \Longleftrightarrow & p \in \states \\
B\models \neg \alpha & \Longleftrightarrow &    B \not \models  \alpha \\
B \models \alpha_1 \wedge \alpha_2 & \Longleftrightarrow &    B \models \alpha_1  \text{ and }     B\models \alpha_2 \\
B \models \expbel{i} \alpha   & \Longleftrightarrow & \alpha \in  B_i
\end{eqnarray*}

\end{definition}

%
%
%
%
%

\begin{definition}[Correct MBB]\label{MABc}
Let $B = (B_1, \ldots, B_n, \states ) \in \classbelbasesimple$.
We say that $B$
is correct if and only if, for all $i \in \AGT$
and for all $\alpha \in \langminus$,
if $\alpha \in B_i$ then $B \models \alpha$.
The class of correct MBBs is denoted by
$
\corrclassbelbasesimple$.
\end{definition}

The following definition introduces the
concept
of epistemic alternative.

\begin{definition}[Epistemic alternatives]\label{doxalt}
Let $ B, B' \in \classbelbasesimple$.
Then,  $B \relstate{i} B'$ if and only if, for every $\alpha \in B_i $, $B' \models \alpha$,
where the satisfaction relation $\models$ follows Definition \ref{satrel1}.

\end{definition}
$B \relstate{i} B'$ means
that
$B'$
is an epistemic alternative for agent $i$ at $B$.
 The idea of the previous definition is that
$B'$
is an epistemic alternative for agent $i$ at $B$
if and only if,
$B'$ satisfies all
facts that agent $i$ explicitly believes
at $B$.

A multi-agent belief model (MAB)
is defined to be a
multi-agent belief base
supplemented with
a
set of multi-agent belief bases,
called \emph{context}.
The latter
includes
 all multi-agent belief bases that are compatible with
the agents' common ground \cite{StalnakerCommonGround},
i.e., the body of information that the agents commonly believe to be the case.
\begin{definition}[Multi-agent belief model]\label{MAM}
A multi-agent belief model (MBM) 
is a pair $ (B,\iconstraint)$,
where
$B \in  \classbelbasesimple$
and
 $  \iconstraint \subseteq    \classbelbasesimple$.
The class of  MBMs  is denoted by $\classbelbase $.
\end{definition}

\subsection{Interpretation of epistemic logic language }

Thanks to the epistemic accessibility relations
defined in Definition \ref{doxalt},
we are able to interpret the language of epistemic
logic
relative to MBMs.
Such language
is denoted by $\lang_{\logic}(\PROP, \AGT)$ and
 is defined by the following grammar:
\begin{center}\begin{tabular}{lcl}
  $\phi$  & $\bnf$ & $p  \mid \neg\phi \mid \phi_1 \wedge \phi_2  \mid  \impbel{i} \varphi
                        $\
\end{tabular}\end{center}
where $p$ ranges over $\ATM$
and $i $ ranges over $\Agt$.
For simplicity,  we write
$\lang_{\logic}$  instead
of $\lang_{\logic}(\PROP, \AGT)$, when the context is unambiguous.
The other Boolean constructions  $\top$, $\bot$, $\vee$, $\imp$ and $\eqv$ are defined from $p$, $\neg$ and $\et$ in the usual way.

%
%
%
%

The formula $ \impbel{i}   \varphi$
has to be read
``agent  $i$ implicitly (or potentially)  believes that $\varphi$ is true''
in the sense that agent  $i$ can derive $\varphi$ from her explicit beliefs
(i.e., from the information  in her belief base).
For the sake of simplicity, we sometimes read the formula 
$ \impbel{i}   \varphi$
as ``agent  $i$ believes that $\varphi$''.
We define the dual operator $\impbelposs{i}$
as follows:
\begin{align*}
\impbelposs{i}  \varphi \eqdef  \neg \impbel{i} \neg \varphi.
\end{align*}
$ \Diamond_{i}   \varphi$
has to be read
``$\varphi$ is compatible (or consistent) with agent $i$'s explicit beliefs''.

As usual, for every formula $\varphi \in \lang_{\logic}$
we denote by $\PROP(\varphi)$ the set of atoms in $\PROP$
occurring in $\varphi$.

We are ready
to define what it means for
a multi-agent belief model (MBM) $ (B,\iconstraint) $
to satisfy a formula  $\varphi  $ in $\lang_{\logic}$,
written $ (B,\iconstraint) \models \varphi$.

\begin{definition}[Satisfaction relation for formulas in $\lang_{\logic}$]\label{truthcond2}
Let $ B = (B_1, \ldots, B_n,  \states ) \in \classbelbasesimple$
 and let $ (B,\iconstraint)  \in \classbelbase$. Then:
\begin{eqnarray*}
 (B,\iconstraint) \models p & \Longleftrightarrow & p \in \states \\
  (B,\iconstraint) \models \neg \varphi & \Longleftrightarrow &   (B,\iconstraint) \not \models  \varphi \\
  (B,\iconstraint) \models  \varphi \wedge \psi & \Longleftrightarrow &     (B,\iconstraint) \models  \varphi  \text{ and }    (B,\iconstraint) \models   \psi  \\
 (B,\iconstraint) \models \impbel{i} \varphi & \Longleftrightarrow & \forall B' \in  \iconstraint  : \text{ if } B \relstate{i} B' \text{ then }
 ( B' , \iconstraint) \models \varphi
\end{eqnarray*}
\end{definition}
Note that, according to the last clause,
an agent $i$ implicitly believes that $\varphi$
(i.e., $\impbel{i} \varphi $) if and only if $\varphi$
is true at all states 
that are compatible with the information in $i$'s belief base.

Figure \ref{fig:framework} illustrates
the general idea
behind the belief base semantics,
especially for what concerns the relationship
between the agents' belief
bases and the agents' common ground (or context)
and the relationship between
the latter and the agents' implicit beliefs.
While an agent's belief
base captures the agent's private information,
the common ground captures the agents' public information.
An agent's implicit belief corresponds to a fact that the agent can deduce
from the public information and her private information.

 \begin{figure}[h]
\centering
\includegraphics[scale=0.48]{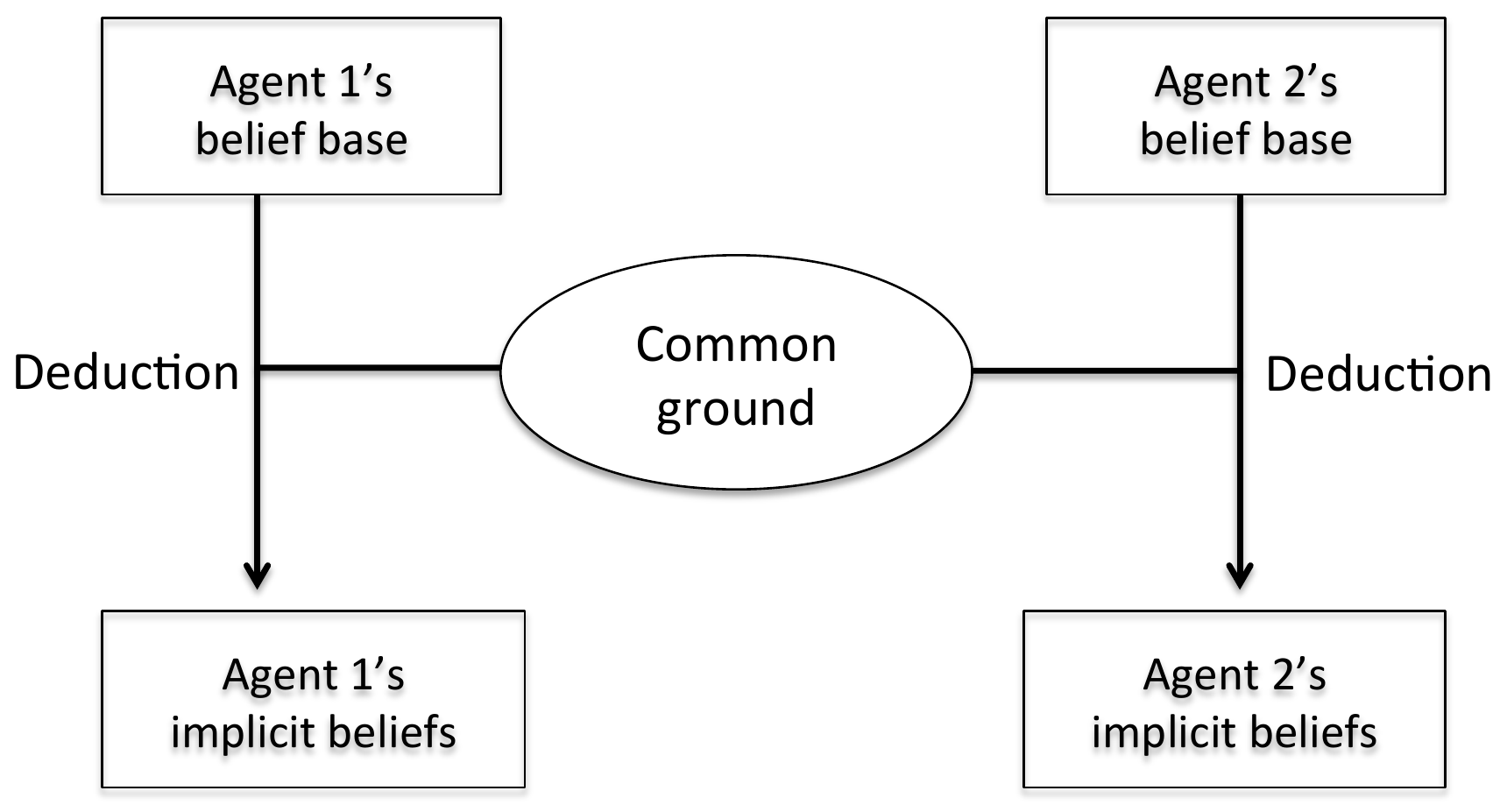}
\caption{Conceptual framework}
\label{fig:framework}
\end{figure}

Given a formula
$\varphi$
in  $\lang_{\logic}$,
we denote by  $\mathit{depth}(\varphi)$
its model depth which is defined as follows:
\begin{align*}
\mathit{depth}(p) & =0\\
\mathit{depth}(\neg \varphi) & =\mathit{depth}( \varphi) \\
\mathit{depth}(\varphi \wedge \psi) & = \max\big(\mathit{depth}(\varphi ), \mathit{depth}( \psi)\big) \\
\mathit{depth}(\impbel{i} \varphi) & =\mathit{depth}( \varphi) +1
\end{align*}

\subsection{Model classes and validity}

%
%

In some situations,
it may be useful
to assume
that
agents' beliefs are correct, i.e.,
what an agent believes
is true.
When talking about correct (or true) explicit and implicit beliefs,
it is usual to call them
explicit and implicit
knowledge.
Indeed, we assume that the terms
  ``true belief'',
      ``correct belief''
and   ``knowledge''
are  synonyms.
The following definition introduces belief
correctness for multi-agent belief models.
\begin{definition}[Belief correctness]\label{BCprop}
Let $ (B,\iconstraint) \in \classbelbase $.
We say that $  (B,\iconstraint)$
satisfies belief correctness (BC) if and only if
$B\in \iconstraint $
and,
for every
 $B' \in  \iconstraint$, $B'  \relstate{i} B'$.
\end{definition}

As the following proposition highlights, belief correctness for multi-agent
belief models is completely characterized by the fact that the actual world is included in the agents' common ground and that the agents' explicit beliefs are correct, i.e.,
if an agent has $\alpha$
in her belief base then $\alpha$
is true in the actual state of the world.
\begin{proposition}
Let $ (B,\iconstraint) \in \classbelbase $.
Then, $  (B,\iconstraint)$ satisfies BC if and only if $B\in \iconstraint $ and
$B ' \in \corrclassbelbasesimple$
for all $B' \in \iconstraint $.
\end{proposition}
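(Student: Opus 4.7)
The plan is to prove this by a straightforward unpacking of the definitions involved, observing that the reflexivity condition $B' \relstate{i} B'$ (for every agent $i$) is just a rewriting of the correctness condition on $B'$.

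First, I would note that both sides of the biconditional share the conjunct $B \in \iconstraint$, so it suffices to establish the equivalence of the remaining two conditions, namely:
(a) for every $B' \in \iconstraint$ and every $i \in \AGT$, $B' \relstate{i} B'$;
(b) for every $B' \in \iconstraint$, $B' \in \corrclassbelbasesimple$.

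Next, I would fix an arbitrary $B' = (B'_1,\ldots,B'_n,\states') \in \iconstraint$ and show that the two conditions are equivalent pointwise. By Definition \ref{doxalt}, $B' \relstate{i} B'$ says that for every $\alpha \in B'_i$, $B' \models \alpha$. Quantifying this over all $i \in \AGT$ yields exactly: for every $i \in \AGT$ and every $\alpha \in B'_i$, $B' \models \alpha$. But this is literally Definition \ref{MABc}, which is the condition for $B' \in \corrclassbelbasesimple$. Since $\langminus$ is the language in which the beliefs are written and $B'_i \subseteq \langminus$, the quantifier $\alpha \in \langminus$ with premise $\alpha \in B'_i$ in Definition \ref{MABc} ranges effectively over $\alpha \in B'_i$, so the two formulations coincide.

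The proof therefore reduces to a single observation and carries no real obstacle beyond checking that the quantifier over agents in the BC clause is properly handled (the statement in Definition \ref{BCprop} has a free $i$ which must be understood as universally quantified). Once this is made explicit, both implications are immediate from the definitions, and no further work is needed.
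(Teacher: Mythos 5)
Your proof is correct and is essentially the argument the paper intends: the proposition is stated without an explicit proof precisely because, as you show, the reflexivity clause $B' \relstate{i} B'$ quantified over all agents is a literal restatement of the correctness condition of Definition~\ref{MABc}. Your remark that the free $i$ in Definition~\ref{BCprop} must be read as universally quantified is the only point of care, and you handle it properly.
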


We let  $\classbelbase_{BC}$
denote
the class of MBMs 
satisfying property $BC$.

Let $\varphi \in \lang_\logic$,
we say that $\varphi$
is valid
for the class  $\classbelbase$, denoted by $\models_{\classbelbase}\varphi$,
if and only if, for every  $(B , \iconstraint) \in \classbelbase$,
we have $ (B , \iconstraint) \models \varphi $.
We say that
$\varphi$
is satisfiable
for the  class  $\classbelbase$
if and only if $\neg \varphi $
is not
valid
for the class $\classbelbase$.
Satisfiability
and validity of a formula $\varphi$
in $ \lang_\logic$ relative to the class $\classbelbase_{BC}$ (denoted by $\models_{\classbelbase_{BC}} \varphi$)
are defined in an analogous way.

\subsection{Equivalence with Kripke semantics}\label{Kripke}

In this section,
we show that the belief base semantics
for the epistemic language 
$\lang_{\logic}$
defined in the previous section
is equivalent to the traditional
multi-relational Kripke semantics for
epistemic logic \cite{Fagin1995}.

\begin{definition}[Multi-relational Kripke model]\label{KripkeStructures}
A multi-relational Kripke model is a structure $M=(W, \Rightarrow_1, \ldots , \Rightarrow_n, \omega )$
such that $W $
is a set of states,  $\Rightarrow_i \subseteq W \times W$ is agent $i$'s
epistemic accessibility relation, and $\omega : \PROP \longrightarrow 2^W$
is a valuation function.
Multi-relational Kripke models
satisfying reflexivity  
are those for which every relation $\Rightarrow_i$ is reflexive
(i.e., for every $w \in W$, $w \Rightarrow_i w$).
\end{definition}

The interpretation of formulas 
in $\lang_{\logic}$
 relative to
a  multi-relational Kripke model
$M=(W, \Rightarrow_1, \ldots , \Rightarrow_n, \omega  )$
and a world $w$ in $W$ is defined as follows:
\begin{align*}
(M,w) \models p & \text{ iff } w \in \omega(p)\\
(M,w) \models \neg \varphi  & \text{ iff } (M,w) \not \models \varphi \\
(M,w) \models  \varphi \wedge \psi & \text{ iff } (M,w)  \models \varphi \text{ and } (M,w)  \models \psi \\
(M,w) \models \impbel{i} \varphi  & \text{ iff }  \forall v \in W: \text{ if } w \Rightarrow_i v  \text{ then } (M,v)  \models \varphi
\end{align*}
Notions
of validity and satisfiability relative
to the class of multi-relational Kripke models
are defined in the usual way. 
We denote the fact that $\varphi$
is valid relative to the class
of multi-relational Kripke models 
(resp. multi-relational Kripke models satisfying reflexivity) 
by $\models_{\mathbf{KripkeM}} \varphi$
(resp. $\models_{\mathbf{ReflKripkeM}} \varphi$).

The equivalence 
between the belief base
semantics for the language $\lang_\logic$
and the multi-relational Kripke semantics
is stated by the following theorem.
\begin{theorem}\label{KripkeThe}
Let $\varphi \in \lang_{\logic} $.
Then, 
\begin{itemize}
\item $\models_{\classbelbase} \varphi$
if and only if  $\models_{\mathbf{KripkeM}} \varphi$,
\item $\models_{\classbelbase_{BC}} \varphi$
if and only if  $\models_{\mathbf{ReflKripkeM}} \varphi$.
\end{itemize}
\end{theorem}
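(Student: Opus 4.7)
The plan is to prove each of the two equivalences by exhibiting model translations in both directions that preserve satisfaction of every $\varphi \in \lang_{\logic}$, so that validity is transferred.

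The direction from MBM to Kripke is the easy one. Given $(B, \iconstraint) \in \classbelbase$, I would define $M = (W, \Rightarrow_1, \ldots, \Rightarrow_n, \omega)$ by $W = \iconstraint \cup \{B\}$, $B' \Rightarrow_i B''$ iff $B'' \in \iconstraint$ and $B' \relstate{i} B''$, and $\omega(p) = \{B' \in W : p \in \states(B')\}$. A routine induction on $\varphi$ then gives $(M, B) \models \varphi$ iff $(B, \iconstraint) \models \varphi$: the modal step unfolds to the same universal quantification over $\iconstraint$ on both sides. For the BC variant, Definition \ref{BCprop} guarantees $B \in \iconstraint$ and $B' \relstate{i} B'$ for every $B' \in \iconstraint$, so each $\Rightarrow_i$ is reflexive on $W$.

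The Kripke-to-MBM direction carries the bulk of the work. Given a Kripke model $M = (W, \Rightarrow_1, \ldots, \Rightarrow_n, \omega)$ with $(M, w) \not\models \varphi$, I would first reduce to the case of finite $W$ by invoking the finite model property of $\mathbf{K}^n$ (and a reflexivity-preserving filtration for the BC case). Since $\PROP$ is countably infinite and $\PROP(\varphi)$ is finite, I can pick pairwise distinct fresh atoms $q_v \in \PROP \setminus \PROP(\varphi)$, one for each $v \in W$, and set
\[
B^v = (B_1^v, \ldots, B_n^v, \states^v), \quad \states^v = \{p : v \in \omega(p)\} \cup \{q_v\}, \quad B_i^v = \{\neg q_u : v \not\Rightarrow_i u\},
\]
together with $\iconstraint = \{B^v : v \in W\}$. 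A direct calculation shows $B^v \relstate{i} B^u$ iff $v \Rightarrow_i u$, using that $q_u$ distinguishes $B^u$ from every $B^{u'}$ with $u \neq u'$. An induction on $\psi \in \lang_{\logic}$ with $\PROP(\psi) \subseteq \PROP(\varphi)$ then yields $(B^v, \iconstraint) \models \psi$ iff $(M, v) \models \psi$; applied at $\psi = \varphi$ and $v = w$, this shows $\varphi$ is falsified in an MBM as well. For the BC variant, reflexivity of $M$ forces $v \Rightarrow_i v$ for every $v$, hence $\neg q_v \notin B_i^v$; then $B^v \models \alpha$ for every $\alpha \in B_i^v$, which together with $B^w \in \iconstraint$ delivers BC.

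The main delicate point is the design of the belief bases in the Kripke-to-MBM direction: the accessibility $\Rightarrow_i$ has to be encoded inside each $B_i^v$ using only $\langminus$-formulas interpreted via $\relstate{i}$. The fresh atoms $q_v$ serve as unique names for worlds and reduce the encoding to an elementary Boolean computation; the reduction to finite $W$ simply ensures that enough fresh atoms are available (alternatively, one could pass to a countable elementary submodel and use countably many fresh atoms together with countably infinite belief bases).
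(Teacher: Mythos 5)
Your proposal is correct and follows essentially the same route as the paper: the same MBM-to-Kripke construction for one direction, and for the converse the finite model property plus fresh atoms from $\PROP\setminus\PROP(\varphi)$ serving as unique world names, followed by an induction on subformulas. The only (immaterial) difference is that you encode each belief base negatively as $\{\neg q_u : v\not\Rightarrow_i u\}$, whereas the paper uses the single positive disjunction $\bigvee_{u:\, v\Rightarrow_i u}\mathit{name}(u)$; both yield $B^v\relstate{i}B^u$ iff $v\Rightarrow_i u$.
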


\section{Universal context}\label{univ1}

\newcommand{  \ucontext}{\classbelbasesimple_{\top}}

 \newcommand{  \corrucontext}{\corrclassbelbasesimple_{\top}}

We here define the notion of \emph{universal context},
as the context containing all possible explicit belief
hierarchies for the agents.
\begin{definition}[$\alpha$-context]\label{Induced}
Let  $\alpha \in \langminus$.
We define
\begin{align*}
\classbelbasesimple_\alpha= \{B\in \classbelbasesimple \suchthat B \models \alpha\}
\end{align*}
to be the $\alpha$-context. The universal context is simply $ \classbelbasesimple_\top$
which is the same thing as $ \classbelbasesimple $.
\end{definition}
The formula $\alpha$
in the previous definition corresponds
to the concept
of \emph{integrity constraint},
as traditionally defined in the area of
knowledge representation \cite{ReiterIntegrity,KoniecznyPerez}.

It is worth to consider a more specific
notion of universal context
under the assumption of belief
correcteness.

\begin{definition}[$\alpha$-context  with belief correctness]\label{Induced2}
Let  $\alpha \in \langminus$.
We define
\begin{align*}
\corrclassbelbasesimple_\alpha= \{B\in \corrclassbelbasesimple \suchthat B \models \alpha\}
\end{align*}
to be the $\alpha$-context with belief correctness. The universal context with belief correctness is simply $ \corrclassbelbasesimple_\top$ which is the same thing
as  $ \corrclassbelbasesimple$.
\end{definition}

Let
$\varphi \in \lang_\logic$, 
we say that $\varphi$
is valid
relative to the universal context $\ucontext$, denoted by $\models_{\ucontext}\varphi$,
if and only if, for every  $B \in \ucontext$,
we have $ (B , \ucontext) \models \varphi $.
Analogously, 
we say that $\varphi$
is valid
relative to the universal context satisfying  BC, denoted by $\models_{ \corrucontext}\varphi$,
if and only if, for every  $B \in  \corrucontext$,
we have $ (B ,  \corrucontext) \models \varphi $.

We say that
$\varphi$
is satisfiable
relative to the universal context  $\ucontext$ (resp. $ \corrucontext$)
if and only if $\neg \varphi $
is not
valid
relative to the universal context $\ucontext$ (resp. $ \corrucontext$).

The following theorem highlights that 
the set of validities 
for the language $ \lang_\logic$
relative to the universal context 
$\ucontext$ (resp. $ \corrucontext$)
and the set of validities relative to the class
of models $\classbelbase$
(resp. $\classbelbase_{ \mathit{BC}}$)
are the same. In other words, 
the language $ \lang_\logic$
is not able to distinguish
the universal context including 
all possible hierarchies of the agents' beliefs
from
incomplete models
in which some belief hierarchy may be missing.

\begin{theorem}\label{teoremone}
Let
$\varphi \in \lang_\logic$.
Then,
\begin{itemize}
\item  $\models_{\classbelbase} \varphi$ if and only if $\models_{\ucontext}\varphi$,
\item  $\models_{\classbelbase_{ \mathit{BC}}} \varphi$ if and only if $\models_{ \corrucontext}\varphi$.
\end{itemize}
\end{theorem}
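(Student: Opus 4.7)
The forward implications $\models_\classbelbase \varphi \Rightarrow \models_\ucontext \varphi$ and $\models_{\classbelbase_{BC}} \varphi \Rightarrow \models_{\corrucontext} \varphi$ are immediate. Indeed, for every $B \in \ucontext$, the pair $(B, \ucontext)$ belongs to $\classbelbase$, so every $\classbelbase$-validity specialises to a validity relative to $\ucontext$. For the BC variant, the proposition following Definition \ref{BCprop} shows that $(B, \corrucontext) \in \classbelbase_{BC}$ whenever $B \in \corrucontext$, since every element of $\corrucontext$ is by definition a correct MBB.

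For the converse directions, by contraposition it suffices to show that $\classbelbase$-satisfiability implies $\ucontext$-satisfiability (and the reflexive analogue). Combining this with Theorem \ref{KripkeThe}, the task reduces to transporting a multi-relational Kripke counterexample to $\varphi$ (reflexive in the BC case) into a counterexample at some pair $(B^*, \ucontext)$. We therefore fix $M = (W, \Rightarrow_1, \ldots, \Rightarrow_n, \omega)$ and $w_0 \in W$ with $(M, w_0) \not\models \varphi$, and adapt the Kripke-to-belief-base direction underlying Theorem \ref{KripkeThe}: assign to each $w \in W$ a belief base $B^w = (B^w_1, \ldots, B^w_n, \states^w)$ with $\states^w = \{p \in \PROP : w \in \omega(p)\}$, and choose each $B^w_i \subseteq \langminus$ saturated enough to encode the entire $\Rightarrow_i$-successor structure of $w$. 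Since $\ucontext = \classbelbasesimple$, each $B^w$ automatically lies in the universal context.

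The core step is then to prove, by structural induction on $\varphi$, that $(B^w, \ucontext) \models \varphi$ if and only if $(M, w) \models \varphi$ for every $w \in W$. The atomic and Boolean cases are immediate from the choice of $\states^w$. For the modal case $\impbel{i} \psi$, one direction follows from the fact that $w \Rightarrow_i v$ entails $B^w \relstate{i} B^v$ by the very construction of $B^w_i$, combined with the induction hypothesis applied at $B^v$. The other direction requires the more substantive observation that any $B^* \in \ucontext$ satisfying $B^w_i$ must be $\langminus$-indistinguishable from some $B^v$ with $w \Rightarrow_i v$ up to a depth controlled by $\mathit{depth}(\varphi)$. Setting $w = w_0$ then yields $(B^{w_0}, \ucontext) \not\models \varphi$, as required.

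The main obstacle lies precisely in the converse half of the modal induction step: in the universal context, $B^w$ is $\relstate{i}$-accessible to \emph{every} belief base in $\ucontext$ satisfying $B^w_i$, including many alien to the image of the assignment $v \mapsto B^v$. The remedy is to build $B^w_i$ to contain sufficiently many higher-order $\langminus$-formulas that any such alien $B^*$ necessarily agrees with some $B^v$ on all $\langminus$-formulas of the relevant depth, which is enough because the semantic clauses for $\lang_\logic$ inspect $B^*$ only through $\relstate{i}$-accessibility. For the BC clause, the same construction applied to a reflexive $M$ produces belief bases satisfying the correctness condition of Definition \ref{MABc}, so the image lies inside $\corrucontext$ and $(B^{w_0}, \corrucontext) \in \classbelbase_{BC}$ by the proposition following Definition \ref{BCprop}.
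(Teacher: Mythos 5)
Your overall strategy is the same as the paper's: reduce, via Theorem~\ref{KripkeThe}, to transporting a Kripke countermodel into the universal context by attaching to each world $w$ a belief base $B^w$ whose components $B^w_i$ encode the $\Rightarrow_i$-successor structure, and then arguing by induction (the paper does this via a $\mathit{depth}(\varphi)$-bisimulation). The problem is that every statement you make about the construction of $B^w_i$ is a promissory note, and the promise is exactly where all the work lies. First, you never exhibit the $\langminus$-formulas. To write down finitary formulas ``encoding the entire successor structure of $w$'' you need $W$ to be finite (your $M$ need not be) and you need the recursive definition of such formulas to terminate (it does not if $M$ has cycles). The paper spends most of its proof manufacturing these prerequisites: it filtrates $M$ through the subformula closure of $\varphi$ to obtain a finite model, unravels it into a finite tree of depth $\mathit{depth}(\varphi)$, and only then defines characteristic formulas $L(\overrightarrow{x})$ by recursion from the leaves (with $\expbel{i}\bot$ at terminal nodes), adding $\mathsf{MB}^{\mathit{depth}(\varphi)}\neg p$ for $p\notin\PROP(\varphi)$ to control the remaining atoms.

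Second, and more seriously, the bridging claim you offer for the alien successors is not sound as stated. You argue it suffices that any $B^*\in\ucontext$ satisfying $B^w_i$ ``agrees with some $B^v$ on all $\langminus$-formulas of the relevant depth.'' That is the wrong invariant: formulas placed in $B^w_i$ can force prescribed formulas \emph{into} $B^*_j$ (via conjuncts $\expbel{j}\alpha$) and can exclude the specific formulas they mention, but they cannot bound $B^*_j$ from above, and depth-bounded agreement on $\langminus$-formulas leaves $B^*_j$ free to contain additional members of higher depth --- including unsatisfiable ones --- which shrink or annihilate $B^*$'s own $\relstate{j}$-successor set in $\ucontext$. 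Such a $B^*$ disagrees with $B^v$ already on the depth-one $\lang_\logic$-formula $\impbelposs{j}\top$, so the ``forth'' half of your modal induction fails at $\varphi=\impbel{i}\impbelposs{j}\top$; indeed no \emph{finite} choice of $B^w_i$ can make this formula true at $(B^w,\ucontext)$, since one can always add to a successor's $j$-base an inconsistent formula not mentioned in $B^w_i$. Precisely because the semantics inspects $B^*$ only through $\relstate{j}$-accessibility, what must be controlled is the entire content of each $B^*_j$, not its depth-bounded explicit-belief fragment; a complete proof has to supply a mechanism for this (the paper does so through its explicit labelling construction and bisimulation claim), rather than assert that sufficient saturation exists.
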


The following theorem is
a corollary of 
Theorems
\ref{KripkeThe}
and \ref{teoremone}.

\begin{theorem}\label{corollario}
Let
$\varphi \in \lang_\logic$.
Then,
\begin{itemize}
\item   $\models_{\ucontext}\varphi$ if and only if $\models_{\mathbf{KripkeM}} \varphi$,
\item   $\models_{ \corrucontext}\varphi$
if and only if $\models_{\mathbf{ReflKripkeM}} \varphi$.
\end{itemize}
\end{theorem}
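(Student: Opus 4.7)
The plan is straightforward: Theorem~\ref{corollario} is a pure transitivity corollary of the two preceding theorems, so I will simply chain the biconditionals. There is no genuine mathematical content beyond composing the two earlier equivalences, and the main task is just to record this composition cleanly for each of the two bullets.

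For the first bullet, I would start from $\models_{\ucontext}\varphi$ and apply the first bullet of Theorem~\ref{teoremone} to obtain $\models_{\classbelbase}\varphi$, then apply the first bullet of Theorem~\ref{KripkeThe} to obtain $\models_{\mathbf{KripkeM}}\varphi$. Since both theorems are stated as biconditionals, the same chain runs in reverse, giving the desired equivalence. Concretely, I would write
\begin{align*}
\models_{\ucontext}\varphi \;\Longleftrightarrow\; \models_{\classbelbase}\varphi \;\Longleftrightarrow\; \models_{\mathbf{KripkeM}}\varphi,
\end{align*}
invoking Theorem~\ref{teoremone} for the first equivalence and Theorem~\ref{KripkeThe} for the second.

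For the second bullet, the argument is identical after substituting the ``BC'' variants throughout: start from $\models_{\corrucontext}\varphi$, use the second bullet of Theorem~\ref{teoremone} to pass to $\models_{\classbelbase_{\mathit{BC}}}\varphi$, and then the second bullet of Theorem~\ref{KripkeThe} to arrive at $\models_{\mathbf{ReflKripkeM}}\varphi$. Again the chain is biconditional, so the reverse direction comes for free.

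Since this is a transitivity argument with no hidden subtlety, I do not anticipate any real obstacle. The only thing I would double-check is that the language $\lang_\logic$ is the same in both source theorems (it is, as both are stated for arbitrary $\varphi \in \lang_\logic$), so the two biconditionals compose with no side conditions on the formula. The proof is therefore expected to be a two-line invocation of the earlier results for each bullet.
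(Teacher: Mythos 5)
Your proposal is correct and matches the paper exactly: the paper states Theorem~\ref{corollario} as a direct corollary of Theorems~\ref{KripkeThe} and~\ref{teoremone}, obtained by chaining the two biconditionals just as you describe, and gives no further proof.
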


The notion of universal context allows us to easily
define the notion of model with \emph{maximal  general uncertainty}, that is, to say
a model in which the agents have maximal uncertainty about
the state of the world as well as maximal uncertainty about
the agents' $k$-order beliefs
for every $k \geq 1$, 
 where the order of a belief is defined inductively as follows: 
 (i) an agent's belief is order 1 if and only if its content is a 
propositional  formula that does not mention beliefs; (ii) an agent's belief is order $k+1$ if and only it is a belief about 
the
 agents' $k$-order beliefs.

\begin{definition}[MBM with maximal general uncertainty]\label{MaxIgn}
Let $B=(B_1 , \ldots, B_n, \states) \in \classbelbasesimple $
such that $B_i= \emptyset $
for every $i \in \AGT$.
Then, $(B,  \ucontext)$ is called
 MBM with maximal general uncertainty, while  $(B,  \corrucontext)$ is called 
 MBM with   maximal  general uncertainty and belief correctness.
\end{definition}

Note that in our semantics maximal general uncertainty coincides
with the fact that the agents' belief bases are empty (i.e., the agents do not know anything). 
The following
proposition is a direct consequence  of  Theorem \ref{teoremone}.
\begin{proposition}\label{propone}
Let $\varphi \in \lang_{\logic}$, let $i \in \AGT$
and let $(B,  \ucontext)$ (resp.  $(B,  \corrucontext)$)
be a  MBM with maximal general uncertainty (resp.  a MBM with maximal general uncertainty and belief correctness).
Then,
\begin{itemize}
\item $\varphi$ is satisfiable for the class $\classbelbase$ if and only if $ (B,  \ucontext) \models \impbelposs{i} \varphi $,
\item $\varphi$ is satisfiable for the class $\classbelbase_{\mathit{BC}}$ if and only if $ (B,  \corrucontext) \models  \impbelposs{i} \varphi $.
\end{itemize}
\end{proposition}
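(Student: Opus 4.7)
The plan is to derive the proposition from Theorem~\ref{teoremone} together with the observation that an empty belief base imposes no constraint on epistemic alternatives, so $\impbelposs{i}$ at the uncertainty point collapses to plain existential quantification over the context.

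First I would unpack the definition of $\impbelposs{i}\varphi \equiv \neg \impbel{i} \neg \varphi$ via Definition~\ref{truthcond2}. This yields $(B,\iconstraint) \models \impbelposs{i} \varphi$ iff there exists $B' \in \iconstraint$ with $B \relstate{i} B'$ and $(B',\iconstraint) \models \varphi$. Now I would use the crucial feature of MBMs with maximal general uncertainty: $B_i = \emptyset$ for every $i \in \AGT$. By Definition~\ref{doxalt}, the condition ``for every $\alpha \in B_i$, $B' \models \alpha$'' is then vacuously satisfied, so $B \relstate{i} B'$ holds for \emph{every} $B' \in \iconstraint$. Hence $(B,\iconstraint) \models \impbelposs{i}\varphi$ collapses to the existence of some $B' \in \iconstraint$ with $(B',\iconstraint) \models \varphi$.

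Specialising to $\iconstraint = \ucontext$, the right-hand side of the first bullet becomes: there exists $B' \in \ucontext$ with $(B',\ucontext) \models \varphi$, which is literally the satisfiability of $\varphi$ relative to the universal context. By Theorem~\ref{teoremone} (first bullet), validity relative to $\classbelbase$ and validity relative to $\ucontext$ coincide; taking negations, satisfiability relative to $\classbelbase$ and satisfiability relative to $\ucontext$ also coincide, which closes the first equivalence. The second equivalence goes through by exactly the same argument with $\corrucontext$ in place of $\ucontext$, provided I also check that the definition makes sense, namely that $(B,\corrucontext)$ is in $\classbelbase_{BC}$: since $B_i=\emptyset$ the MBB $B$ is trivially correct hence $B \in \corrucontext$, and every $B' \in \corrucontext$ is correct so $B' \relstate{i} B'$ follows immediately from Definitions~\ref{MABc} and \ref{doxalt}; thus the BC clause of Definition~\ref{BCprop} holds and the second bullet of Theorem~\ref{teoremone} applies.

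There is no real obstacle; the only point that requires a bit of care is justifying the reduction of $\impbelposs{i}$ to existential quantification over the context, which hinges on the vacuous satisfaction of the accessibility clause when $B_i = \emptyset$, and checking that the context is non-empty (it contains $B$ itself) so that the existential claim is not trivialised in the wrong direction. Once these points are made explicit, the proposition is a direct corollary of Theorem~\ref{teoremone}.
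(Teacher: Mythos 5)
Your proposal is correct and follows exactly the route the paper intends: the paper states Proposition~\ref{propone} as a direct consequence of Theorem~\ref{teoremone}, and your argument --- that an empty belief base makes $\relstate{i}$ total so $\impbelposs{i}$ reduces to existential quantification over the context, after which Theorem~\ref{teoremone} (negated) identifies satisfiability over $\classbelbase$ (resp.\ $\classbelbase_{BC}$) with satisfiability over $\ucontext$ (resp.\ $\corrucontext$) --- is precisely the missing justification, including the appropriate sanity checks that $B \in \corrucontext$ and that the contexts are non-empty.
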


It highlights the essential aspects of models with maximal general uncertainty. 
To see this, suppose $\varphi $
and $\neg \varphi$
are both satisfiable for the class $\classbelbase$
(resp. for the class $\classbelbase_{\mathit{BC}}$).
Then, by Proposition \ref{propone}, 
we have 
$ (B,  \ucontext) \models \impbelposs{i} \varphi  \wedge \impbelposs{i}  \neg \varphi  $
(resp. $  (B,  \corrucontext) \models \impbelposs{i} \varphi  \wedge \impbelposs{i}  \neg \varphi  $), 
where
$\impbelposs{i} \varphi  \wedge \impbelposs{i}  \neg \varphi $
captures agent $i$'s uncertainty about $\varphi$.
More generally,
for every formula
$\varphi$
yielding information either about the state
of the world
or about the agents' higher-order beliefs,
if $\varphi$
and 
 $\neg \varphi$ are both satisfiable, 
 then
 in a model with maximal uncertainty
  every agent $i$
 has uncertainty about $\varphi$.

\section{Only believing}\label{extensions}

The aim of this section is to show that if
we increase the expressive power 
of our multimodal epistemic
language by the notions of `believing at most'
and `only believing', then the equivalence result
between the generic semantics
in terms
of MBMs 
and the universal context semantics 
of Section \ref{univ1} does not hold anymore.

Let $\lang_{\logicext}(\PROP, \AGT)$
be the language which 
extends  language $\lang_{\logic}$
by modal operators of implicitly believing `at most'
of the form $ \complbel{i}$,
where $\logicext$
stands for ``Extended Epistemic Logic''.
It is defined by the following grammar:
 \begin{center}\begin{tabular}{lcl}
  $\phi$  & $\bnf$ & $p  \mid \neg\phi \mid \phi_1 \wedge \phi_2  \mid  \impbel{i} \varphi \mid  \complbel{i}\varphi
                        $\
\end{tabular}\end{center}
where $p$ ranges over $\PROP$
and $i $ ranges over $\Agt$.
For simplicity,  we write
 $\lang_{\logicext}$  instead
of $\lang_{\logicext}(\PROP, \AGT)$, when the context is unambiguous.

The formula $\complbel{i}\varphi$
has to be read ``agent $i$ believes at most that $\neg \varphi$''
and has the following interpretation
relative to MBMs.

\begin{definition}[Satisfaction relation (cont.)]\label{truthcond3}
Let\\ $ B = (B_1, \ldots, 
B_n,  \states ) \in \classbelbasesimple$
 and let $ (B,\iconstraint)  \in \classbelbase$. Then:
\begin{eqnarray*}
 (B,\iconstraint) \models \complbel{i}\varphi & \Longleftrightarrow & \forall B' \in   \big( \iconstraint \setminus \relstate{i} (B) \big)  : 
 ( B' , \iconstraint) \models \varphi
\end{eqnarray*}
where $ \relstate{i} (B) = \{B' \suchthat B \relstate{i}  B'\}$.
\end{definition}

The definition of modal depth of
a formula in   $\lang_{\logicext}$
extends the definition
of the modal depth
of formulas in $\lang_{\logic}$
by the following additional clause:
\begin{align*}
\mathit{depth}( \complbel{i}\varphi) & =\mathit{depth}( \varphi) +1
\end{align*}

By combining the operators $\impbel{i}$
and $\complbel{i}$ in the appropriate way,
we can reconstruct the universal modality \cite{Hemaspaandra96,GorankoPassyUMOD}:
\begin{align*}
\mathsf{U}\varphi \eqdef \impbel{i} \varphi \wedge \complbel{i}\varphi 
\end{align*}
where $\mathsf{U}\varphi $
has to be read  ``$\varphi$
is universally true''.
We can moreover
the reconstruct the ``only believing'' modality \cite{LevesqueOnly,HalpernLakeOnly,Lakemeyer93AK}:
\begin{align*}
\mathsf{O}_i \varphi \eqdef \impbel{i} \varphi \wedge \complbel{i}\neg \varphi 
\end{align*}
where $\mathsf{O}_i \varphi $
has to be read  ``all that agent $i$ believes is $\varphi$''.
From the perspective of the logic of only believing,
all that agent $i$ believes is $\varphi$
if and only if agent $i$ believes at least that $\varphi$
is true
(i.e., $\impbel{i} \varphi $)
and she believes at most that $\varphi$ is true
(i.e., $\complbel{i}\neg \varphi  $).

Definitions of validity and satisfiability
for formulas in $\lang_{\logicext}$
relative to the class $\classbelbase$ (resp. $\classbelbase_{\mathit{BC}}$)
and relative to the universal context $\ucontext$ (resp. $\corrucontext$)
coincide with those for formulas in $\lang_{\logic}$ given above.

The following theorem highlights
that the semantics for the language $\lang_\logicext$
based on the universal context
contains more validities than the semantics
for the language $\lang_\logicext$
based on  the generic class
$\classbelbase$.

\begin{theorem}\label{nonequiv}
We have the following relationship between sets of validities:
\begin{itemize}
\item    $\{\varphi \in \lang_\logicext \suchthat \models_{\classbelbase} \varphi \} \subset \{\varphi \in \lang_\logicext \suchthat \models_{\ucontext}\varphi \}$,
\item    $\{\varphi \in \lang_\logicext \suchthat \models_{\classbelbase_{\mathit{BC}}} \varphi \} \subset \{\varphi \in \lang_\logicext \suchthat \models_{\corrucontext} \varphi \}$.
\end{itemize}
\end{theorem}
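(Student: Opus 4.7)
The plan is to establish both items by separating the easy set-inclusion $\subseteq$ from the strict inequality, for which one needs an explicit separating formula in $\lang_\logicext$.

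For the inclusion, I would fix a formula $\varphi$ with $\models_\classbelbase\varphi$ and observe that every pair $(B,\ucontext)$ with $B\in\ucontext$ is itself an MBM in $\classbelbase$: Definition \ref{MAM} allows any subset of $\classbelbasesimple$ as a context, and $\ucontext=\classbelbasesimple$ is in particular such a subset. Hence $(B,\ucontext)\models\varphi$, which yields $\models_\ucontext\varphi$. For the BC case one additionally observes that every $B\in\corrucontext$ is correct, so $B\relstate{i} B$ holds trivially for each $i$, making $(B,\corrucontext)\in\classbelbase_{\mathit{BC}}$, and the same argument goes through.

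For strictness, the key observation is the one already recorded in the text: the definable operator $\mathsf{U}\psi\eqdef\impbel{1}\psi\wedge\complbel{1}\psi$ acts as the universal modality over the context, i.e.\ $(B,\iconstraint)\models\mathsf{U}\psi$ iff $(B',\iconstraint)\models\psi$ for every $B'\in\iconstraint$. Fixing an atom $p\in\PROP$, I propose the witness $\varphi^{*}\eqdef\neg\mathsf{U} p$, which asserts that the context contains some world falsifying $p$. To see $\models_\ucontext\varphi^{*}$, I note that $\ucontext=\classbelbasesimple$ always contains the MBB $B^{\circ}=(\emptyset,\ldots,\emptyset,\emptyset)$, which falsifies $p$, so $\mathsf{U} p$ cannot hold at any $(B,\ucontext)$. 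The same $B^{\circ}$ is vacuously correct and hence lies in $\corrucontext$, giving $\models_{\corrucontext}\varphi^{*}$ by the same argument. To refute $\varphi^{*}$ in $\classbelbase$, I take $B^{*}=(\emptyset,\ldots,\emptyset,\{p\})$ and $\iconstraint^{*}=\{B^{*}\}$: the unique world of $\iconstraint^{*}$ satisfies $p$, so $(B^{*},\iconstraint^{*})\models\mathsf{U} p$. Moreover $B^{*}$ has empty belief bases, so it is correct and $B^{*}\relstate{i}B^{*}$ holds vacuously for every $i$; hence $(B^{*},\iconstraint^{*})\in\classbelbase_{\mathit{BC}}\subseteq\classbelbase$, and the same pair refutes $\varphi^{*}$ in both settings.

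The only real obstacle is to respect the restricted vocabulary of $\lang_\logicext$: unlike $\langminus$ it contains no explicit-belief operator $\expbel{i}$, so the separating formula must be built solely from $\impbel{i}$, $\complbel{i}$, atoms and Boolean connectives. Applying the universal modality $\mathsf{U}$ to a single atom sidesteps this restriction, and the fact that both $B^{\circ}$ and $B^{*}$ carry only empty belief bases ensures that one and the same witness works uniformly in the BC and non-BC cases.
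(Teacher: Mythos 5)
Your proposal is correct and follows essentially the same route as the paper: the inclusion is the trivial observation that $(B,\ucontext)$ is itself a member of $\classbelbase$ (resp.\ $\classbelbase_{\mathit{BC}}$), and strictness is witnessed by an ``$\mathsf{E}$ of a conjunction of literals'' formula, your $\neg\mathsf{U}p$ being exactly the paper's $\chi=\mathsf{E}\bigl(\bigwedge_{p\in X}p\wedge\bigwedge_{q\in Y}\neg q\bigr)$ instantiated with $X=\emptyset$ and $Y=\{p\}$. You merely spell out the details the paper leaves implicit (the singleton context $\{B^{*}\}$ refuting the witness, and the check that it satisfies BC), all of which are correct.
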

Showing that 
$\{\varphi \in \lang_\logicext \suchthat \models_{\classbelbase} \varphi \} \subseteq \{\varphi \in \lang_\logicext \suchthat \models_{\ucontext}\varphi \}$
is trivial since $(B,\ucontext) \in \classbelbase$ for every $B \in \ucontext $.  
Furthermore,
it is  easy to find a formula $\chi$ such that $ \not \models_{\classbelbase} \chi $
and $\models_{\ucontext}\chi$. The following formula is an example:
\begin{align*}
 \chi \eqdef \mathsf{E} \big(\bigwedge_{p \in X} p \wedge \bigwedge_{q \in Y} \neg q \big) 
\end{align*}
with $\mathsf{E} \psi \eqdef \neg \mathsf{U} \neg \psi $
and finite $X,Y \subseteq \PROP$
such that $X \cap Y= \emptyset$.

Note that, we can easily adapt the proof of Theorem \ref{KripkeThe}
to show that 
the set of 
$\lang_\logicext$-validities relative to the class of multi-relational
Kripke models
and the set of validities relative to the generic belief base semantics
are the same.
\begin{theorem}\label{KripkeThe2}
Let $\varphi \in \lang_{\logicext} $.
Then, 
\begin{itemize}
\item $\models_{\classbelbase} \varphi$
if and only if $\models_{\mathbf{KripkeM}} \varphi$,
\item $\models_{\classbelbase_{BC}} \varphi$
if and only if $\models_{\mathbf{ReflKripkeM}} \varphi$.
\end{itemize}
\end{theorem}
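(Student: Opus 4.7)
The plan is to mirror the proof of Theorem \ref{KripkeThe} and extend its induction on formula structure by exactly one new case, namely $\complbel{i}\psi$. The translation between MBMs and Kripke models used there already lines up the accessibility relation $\relstate{i}$ with the Kripke relation $\Rightarrow_i$, and the only new ingredient needed here is that this translation also puts the complement of the accessibility set on the MBM side into bijection with the complement on the Kripke side.

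For the MBM-to-Kripke direction, I would associate with an arbitrary MBM $(B,\iconstraint)$ the Kripke model $M=(W,\Rightarrow_1,\ldots,\Rightarrow_n,\omega)$ defined by $W=\iconstraint$, $B'\Rightarrow_i B''$ iff $B'\relstate{i} B''$, and $\omega(p)=\{B'\in\iconstraint : B'\models p\}$, and prove by induction on $\varphi\in\lang_{\logicext}$ that $(B',\iconstraint)\models\varphi$ iff $(M,B')\models\varphi$ for every $B'\in\iconstraint$. The Boolean and $\impbel{i}$ cases are copied verbatim from the proof of Theorem \ref{KripkeThe}. The new case $\varphi=\complbel{i}\psi$ follows because, by Definition \ref{truthcond3}, $(B',\iconstraint)\models\complbel{i}\psi$ iff $(B'',\iconstraint)\models\psi$ for all $B''\in\iconstraint\setminus\relstate{i}(B')$, and by construction this is the same set of worlds as $W\setminus\{v\in W : B'\Rightarrow_i v\}$, which is what the natural Kripke clause for $\complbel{i}$ quantifies over. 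For the BC variant, one additionally notes that BC in an MBM forces each $\Rightarrow_i$ in the associated Kripke model to be reflexive.

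For the converse Kripke-to-MBM direction, I would reuse the labelling-based construction from the proof of Theorem \ref{KripkeThe}: given $M=(W,\Rightarrow_1,\ldots,\Rightarrow_n,\omega)$ (w.l.o.g.\ countable, by Löwenheim--Skolem), use the infinite supply $\PROP$ to equip each $v\in W$ with a distinguishing atom and define a belief base $B^v$ whose state $\states^v=\{p : v\in\omega(p)\}$ reflects the valuation and whose components $B^v_i$ are chosen so that $B^v\relstate{i} B^u$ holds iff $v\Rightarrow_i u$. Set $\iconstraint=\{B^v : v\in W\}$. The induction then proceeds as before, and for $\complbel{i}\psi$ the decisive equality is
\begin{equation*}
\iconstraint\setminus\relstate{i}(B^v) \;=\; \{B^u : u\in W,\ v\not\Rightarrow_i u\},
\end{equation*}
which is exactly the set over which the Kripke interpretation of $\complbel{i}\psi$ quantifies. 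Reflexivity of each $\Rightarrow_i$ in the source Kripke model translates to BC in the constructed MBM, handling the second bullet.

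The main obstacle is making sure that in the Kripke-to-MBM step the context $\iconstraint$ is precisely the image of $W$ under $v\mapsto B^v$ and that $\relstate{i}$ restricted to this image coincides with the transported Kripke relation: any stray element in $\iconstraint$ that is not of the form $B^u$, or any spurious $\relstate{i}$-edge, would corrupt the complement set used by $\complbel{i}$ even though it would be harmless for $\impbel{i}$ alone. Both requirements are guaranteed by the careful labelling construction inherited from Theorem \ref{KripkeThe}, so once that proof is in place the extension to $\lang_{\logicext}$ is entirely routine.
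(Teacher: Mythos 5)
Your overall strategy---rerun the two constructions from the proof of Theorem \ref{KripkeThe} and add a single new clause to the induction for $\complbel{i}$---is exactly the adaptation the paper gestures at (it gives no further detail itself), and your Kripke-to-MBM direction is handled correctly: the naming atoms make $v\mapsto B^v$ a bijection onto $\iconstraint$ under which $\relstate{i}$ restricted to $\iconstraint$ coincides with $\Rightarrow_i$, so complements of accessibility sets are matched as well, which is the only genuinely new point that $\complbel{i}$ requires. You also correctly identify the two failure modes (stray context elements, spurious $\relstate{i}$-edges) that would be harmless for $\impbel{i}$ but fatal for $\complbel{i}$.

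The gap is in the MBM-to-Kripke direction. You set $W=\iconstraint$, but Definition \ref{MAM} does not require $B\in\iconstraint$, so the evaluation point of $(B,\iconstraint)\models\varphi$ may be absent from your Kripke model, and your inductive claim ``$(B',\iconstraint)\models\varphi$ iff $(M,B')\models\varphi$ for every $B'\in\iconstraint$'' then says nothing about $\varphi$ being Kripke-satisfiable. The obvious repair---adding a world $w_B$ as in the proof of Theorem \ref{KripkeThe}---runs into precisely the phenomenon you warn about in the other direction: if $B\notin\iconstraint$, the extra world $w_B$ must fall either into $\Rightarrow_i(w_B)$ or into its complement $W\setminus\Rightarrow_i(w_B)$, whereas on the MBM side $B$ is quantified over by \emph{neither} $\impbel{i}$ nor $\complbel{i}$, since both clauses of Definitions \ref{truthcond2} and \ref{truthcond3} range only over $\iconstraint$. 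Concretely, $p\wedge\impbel{i}\neg p\wedge\complbel{i}\neg p$ is satisfied by any $(B,\iconstraint)$ with $p\in\states$, $B\notin\iconstraint$ and $p$ false throughout $\iconstraint$, yet it is false at every world of every Kripke model once $\complbel{i}$ is read over the complement of $\Rightarrow_i$, which is the ``natural Kripke clause'' your argument invokes. So this direction needs either an explicit case analysis on whether $B\in\iconstraint$, or a precise commitment to the class of Kripke structures (e.g.\ birelational ones with an independent relation interpreting $\complbel{i}$) over which $\models_{\mathbf{KripkeM}}$ is taken for $\lang_{\logicext}$; as written, the step does not go through.
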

The following is a direct consequence of Theorems 
\ref{nonequiv} and
\ref{KripkeThe2}.

\begin{theorem}\label{nonequiv2}
We have the following relationship between sets of validities:
\begin{itemize}
\item    $\{\varphi \in \lang_\logicext \suchthat \models_{\mathbf{KripkeM}} \varphi \} \subset \{\varphi \in \lang_\logicext \suchthat \models_{\ucontext}\varphi \}$,
\item    $\{\varphi \in \lang_\logicext \suchthat \models_{\mathbf{ReflKripkeM}} \varphi \} \subset \{\varphi \in \lang_\logicext \suchthat \models_{\corrucontext} \varphi \}$.
\end{itemize}
\end{theorem}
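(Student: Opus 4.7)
The plan is to derive Theorem \ref{nonequiv2} as an immediate corollary by chaining Theorems \ref{KripkeThe2} and \ref{nonequiv}, so essentially the proof is a set-theoretic substitution. The only real content is making the substitution precise and noting that strict inclusion is preserved.

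First, I would invoke Theorem \ref{KripkeThe2} to rewrite the left-hand side of each claimed inclusion. Since $\models_{\mathbf{KripkeM}} \varphi$ iff $\models_{\classbelbase} \varphi$ for every $\varphi \in \lang_{\logicext}$, the sets $\{\varphi \in \lang_\logicext \suchthat \models_{\mathbf{KripkeM}} \varphi \}$ and $\{\varphi \in \lang_\logicext \suchthat \models_{\classbelbase} \varphi \}$ coincide as subsets of $\lang_\logicext$. Similarly, under the reflexive restriction, $\{\varphi \in \lang_\logicext \suchthat \models_{\mathbf{ReflKripkeM}} \varphi \}$ coincides with $\{\varphi \in \lang_\logicext \suchthat \models_{\classbelbase_{\mathit{BC}}} \varphi \}$. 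This is a direct consequence of Theorem \ref{KripkeThe2} applied formula by formula.

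Second, I would invoke Theorem \ref{nonequiv}, which already furnishes the strict inclusions
\begin{align*}
\{\varphi \in \lang_\logicext \suchthat \models_{\classbelbase} \varphi \} &\subset \{\varphi \in \lang_\logicext \suchthat \models_{\ucontext}\varphi \}, \\
\{\varphi \in \lang_\logicext \suchthat \models_{\classbelbase_{\mathit{BC}}} \varphi \} &\subset \{\varphi \in \lang_\logicext \suchthat \models_{\corrucontext} \varphi \}.
\end{align*}
Replacing the left-hand sides by the coinciding sets from the previous paragraph yields exactly the two inclusions in the statement of Theorem \ref{nonequiv2}. Strictness transfers automatically because we are substituting equal sets.

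There is no real obstacle: the work has already been done in Theorems \ref{KripkeThe2} and \ref{nonequiv}. The only thing to be mildly careful about is that the witness $\chi$ separating $\classbelbase$-validity from $\ucontext$-validity (given in the discussion after Theorem \ref{nonequiv}) automatically becomes a witness separating $\mathbf{KripkeM}$-validity from $\ucontext$-validity, because $\mathbf{KripkeM}$-validity and $\classbelbase$-validity are the same set of formulas by Theorem \ref{KripkeThe2}; and analogously in the reflexive case. This is what guarantees that the inclusions remain strict, not merely non-strict.
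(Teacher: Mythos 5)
Your proposal is correct and matches the paper's own argument exactly: the paper derives Theorem \ref{nonequiv2} as a direct consequence of Theorems \ref{nonequiv} and \ref{KripkeThe2}, which is precisely the substitution-of-equal-sets argument you give. Your added remark that strictness is preserved because the separating formula $\chi$ carries over is a sensible explicit check of a point the paper leaves implicit.
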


\section{Qualitative belief structures}\label{univtypespace}

In this section, we 
consider a different representation
of the universal epistemic
model containing all possible belief hierarchies for
the agents in the system
first proposed by \cite{FaginUniv2} (see also \cite{FaginUniv1}).
Differently from
the definition of the universal
context given in Section \ref{univ1}, which does not require any
inductive construction of belief hierarchies, 
Fagin et al.'s
definition of the universal 
epistemic model  is inductive. 
As emphasized by \cite{KleinBenthem},
Fagin et al.'s universal
epistemic model can be seen as qualitative counterpart of
the notion of probabilistic universal type space
by \cite{UnivZamir}.
A similar inductive construction of the universal
epistemic
model
was proposed more recently
by \cite{DBLP:conf/kr/BelleL10} (see also \cite{DBLP:conf/ijcai/AucherB15}),
as a semantics
for the logic of multi-agent only knowing. 

Fagin et al. study
the universal
epistemic model
for the multimodal logic S5$^n$,
including both positive and negative introspection principles
for knowledge. We here consider variants
of Fagin et al.'s construction  for the multimodal logics
K$^n$
and 
KT$^n$,
the latter including the veracity
principle for knowledge
according to which what an agent knows cannot be false.\footnote{Belle \& Lakemeyer \cite{DBLP:conf/kr/BelleL10} 
offer an inductive construction
of the universal epistemic model
for the multimodal logic K45$^n$.}

Given an arbitrary set $Y$,
let
$F(Y) $
be the set of functions with domain $Y$
and codomain $ \{0,1\}$.
We define the set $Z_k$
in an inductive way as follows:
\begin{align*}
Z_0= &F(\ATM)\\
Z_{k +1}=& Z_k \times F(Z_k)^n.
\end{align*}
Thus, we have
\begin{align*}
Z_{k +1}=& Z_0  \times F(Z_0)^n \times \ldots \times F(Z_k)^n.
\end{align*}
Elements of 
$Z_0$
are denoted by $f_0$,
while 
elements of
$F(Z_k)^n$
are denoted by $f_{k+1}, f_{k+1}', \ldots$
Moreover,  for every $i \in \AGT$ and for every $f_{k+1} \in F(Z_k)^n$,
$f_{k+1}(i)$ denotes the $i$-th component in the tuple 
 $f_{k+1} $.

Elements of $Z_{k }$
are called
 \emph{$k+1$-ary worlds},
 or simply  \emph{$k+1$-worlds}.
 Elements of $\bigcup_{k \in  \mathbb{N}_0} Z_{k }$
 are called  \emph{worlds}.
 The set of worlds  is denoted by $\mathbf{W}$.

The set of belief
structures 
is defined as follows:
\begin{align*}
\mathbf{BS}  =& Z_0 \times F(Z_{0})^n \times F(Z_{1})^n \times \ldots
\end{align*}
Thus, a belief structure 
is a countably infinite sequence $f= (f_0, f_1, \ldots)$
such that $f_0 \in Z_0 $
and, for every $k \in \mathbb{N}_0$,
$f_{k+1} \in F(Z_k)^n$.
It follows that 
$f= (f_0, f_1, \ldots)$
is a belief structure 
if and only if, for every $k \in \mathbb{N}_0$,
 $(f_0, \ldots, f_k )$
 is a 
$k+1$-world.

\begin{definition}[Coherent world and belief structure]\label{CoherCond}
A world 
 $(f_0, \ldots, f_{k} )$
is said to be coherent if and only if,
the following conditions hold:
\begin{itemize}

\item for every $i \in \AGT$
and for every $2 \leq h \leq k$:\\ if $ f_{h}(i) (g_0, \ldots, g_{h-1})=1$
then $ f_{h-1}(i) (g_0, \ldots, g_{h-2})=1$;

\item for every $i \in \AGT$
and for every $1 \leq h \leq k-1$:\\
if $ f_{h}(i) (g_0, \ldots, g_{h-1})=1$
then there exists $g_h \in F(Z_{h-1})^n$
such that 
$ f_{h+1}(i) (g_0, \ldots, g_{h-1}, g_{h})=1$.

\end{itemize}
 The set of coherent worlds  is denoted by $\mathbf{CW}$.
We say that 
the belief structure  $f= (f_0, f_1, \ldots)$
is coherent if,
for every $k \in \mathbb{N}_0$,
 $(f_0, \ldots, f_k )$
 is a 
coherent world.
The set of coherent belief structures  is denoted by $\mathbf{CBS}$.

\end{definition}

The following 
definition introduces the property of correctness for worlds and belief structures.

\begin{definition}[Correct world and belief structure]\label{CorrectCond}
A world 
 $(f_0, \ldots,  f_{k} )$ with $k \geq 1$
satisfies belief correctness if and only if, for every $i \in \AGT$,
$f_k(i)(f_0, \ldots, f_{k-1})=1 $.
 The set of coherent  worlds satisfying belief correctness (BC)  is denoted by $\mathbf{CW}_{BC}$.
We say that 
the belief structure  $f= (f_0, f_1, \ldots)$
is correct if,
for every $k \geq 1$,
 $(f_0, \ldots, f_k )$
 is a 
correct world.
The set of coherent belief structures satisfying belief correctness (BC)  is denoted by $\mathbf{CBS}_{BC}$.
\end{definition}


A formula $\varphi$ of the 
language $\lang_{\logicext}$
is interpreted relative to
a coherent  $k+1$-world
 $(f_0, \ldots, f_k ) \in \mathbf{CW}$ 
 such that $k \geq \mathit{depth}(\varphi)$,
 as follows:
 \begin{eqnarray*}
(f_0, \ldots, f_k)  \models p & \Longleftrightarrow & f_0(p) =1\\
(f_0, \ldots, f_k)  \models \neg \varphi & \Longleftrightarrow & (f_0, \ldots, f_k) \not  \models \varphi\\
(f_0, \ldots, f_k)  \models  \varphi  \wedge \psi & \Longleftrightarrow & (f_0, \ldots, f_k)   \models \varphi \text{ and }(f_0, \ldots, f_k)  \models \psi\\
(f_0, \ldots, f_k)   \models \impbel{i} \varphi & \Longleftrightarrow & \forall (g_0, \ldots, g_{k-1}) \in  Z_{k-1} :\\
&& \text{if }  f_k(i) (g_0, \ldots, g_{k-1}) = 1 \\
&&\text{then }
(g_0, \ldots, g_{k-1})   \models \varphi \\
(f_0, \ldots, f_k)   \models \complbel{i}\varphi & \Longleftrightarrow & \forall (g_0, \ldots, g_{k-1})  \in  Z_{k-1} :\\
&& \text{if }  f_k(i) (g_0, \ldots, g_{k-1}) = 0 \\
&&\text{then }
(g_0, \ldots, g_{k-1})   \models \varphi 
\end{eqnarray*}

The following proposition is a generalization of \cite[Lemma 2.5]{FaginUniv2}
to the language $\lang_{\logicext}$.\footnote{\cite[Lemma 2.5]{FaginUniv2} only applies
to the standard epistemic language with epistemic operators $\impbel{i}$ and without
operators $ \complbel{i}$.}

\begin{proposition}\label{lemmuccio}
Let $ \mathit{depth}(\varphi) = h$,
$k \geq h$, and
$(f_0, \ldots, f_{h})  ,
(f_0, \ldots, f_{k}) \in \mathbf{CW} $.
Then,
$(f_0, \ldots, f_{h})   \models \varphi $
if and only if 
$(f_0, \ldots, f_{k})   \models \varphi $. 

\end{proposition}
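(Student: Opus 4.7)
I would proceed by structural induction on $\varphi$, exploiting that $\mathbf{CW}$ is closed under taking prefixes, so every prefix $(f_0,\ldots,f_{h'})$ with $h' \le k$ is itself coherent and can serve as a target for the inductive hypothesis.

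The atomic case is immediate: $(f_0,\ldots,f_h) \models p$ iff $f_0(p)=1$ iff $(f_0,\ldots,f_k) \models p$, since both sequences share $f_0$. The Boolean cases $\neg\psi$ and $\psi_1\wedge\psi_2$ reduce at once to the inductive hypothesis, since $\mathit{depth}$ does not increase through them.

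For the modal case $\varphi = \impbel{i}\psi$ with $\mathit{depth}(\psi) = h-1$, the plan is to establish the set equality
\[
f_h(i)^{-1}(1) \;=\; \{ (g_0,\ldots,g_{h-1}) : \exists (g_h,\ldots,g_{k-1}),\ f_k(i)(g_0,\ldots,g_{k-1}) = 1 \}.
\]
The inclusion from right to left follows by iterating the first coherence condition of Definition \ref{CoherCond}: a $1$ of $f_k(i)$ projects to a $1$ of $f_{k-1}(i)$, and so on down to $f_h(i)$. The inclusion from left to right follows by iterating the second coherence condition: every $1$ of $f_h(i)$ extends to some $1$ of $f_{h+1}(i)$, and inductively up to level $k$. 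Combined with the inductive hypothesis applied to $\psi$ — which says that the truth of $\psi$ on a coherent $(g_0,\ldots,g_{k-1})$ is determined by its length-$h$ prefix — the equivalence of $(f_0,\ldots,f_h) \models \impbel{i}\psi$ and $(f_0,\ldots,f_k) \models \impbel{i}\psi$ follows.

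The case $\varphi = \complbel{i}\psi$ is handled by the dual argument, with the roles of $1$ and $0$ swapped in the displayed set equality. The contrapositive of the first coherence condition gives one inclusion almost for free: if $f_h(i)(g_0,\ldots,g_{h-1}) = 0$, then $f_k(i)(g_0,\ldots,g_{k-1}) = 0$ for every extension, and coherent extensions of any coherent $h$-world exist, so $f_h(i)^{-1}(0)$ embeds into the $h$-prefixes of $f_k(i)^{-1}(0)$. The converse direction — that every $0$ of $f_k(i)$ forces its length-$h$ prefix into $f_h(i)^{-1}(0)$ (modulo the IH on $\psi$) — is where I expect the main obstacle to lie, since Definition \ref{CoherCond} phrases its existence clause only for $1$-values, so unlike the $\impbel{i}$ case the set equality is not completely symmetric. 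Closing this direction requires a more delicate argument exploiting that the IH on $\psi$ lets us replace the pointwise condition by one depending only on the $h$-prefix, after which the two ``impossibility'' sets can be shown to generate the same constraints. Once this dual correspondence is in place, the IH on $\psi$ closes the case exactly as in the $\impbel{i}$ case.
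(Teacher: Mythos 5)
Your Boolean and $\impbel{i}$ cases are sound and coincide with the paper's route: the paper handles $\impbel{i}$ exactly by the projection/extension argument you describe, deferring to \cite[Lemma 2.5]{FaginUniv2}. The genuine problem sits precisely where you flag it, namely the direction from the $h$-world to the $k$-world for $\complbel{i}\psi$. You correctly observe that Definition \ref{CoherCond} constrains only the $1$-values, so the set of $h$-prefixes of the zeros of $f_k(i)$ need not coincide with $f_h(i)^{-1}(0)$; but you then defer to ``a more delicate argument'' that would restore a dual correspondence. No such argument can exist, because the implication itself fails on coherent worlds. Take a single relevant agent, $\psi=p$, $h=1$, $k=2$. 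For every $i\in\AGT$ put $f_1(i)(g_0)=1$ iff $g_0(p)=0$; fix some $g_1^*\in F(Z_0)^n$ and put $f_2(i)(g_0,g_1)=1$ iff $g_0(p)=0$ and $g_1=g_1^*$. Both clauses of Definition \ref{CoherCond} hold, so $(f_0,f_1)$ and $(f_0,f_1,f_2)$ belong to $\mathbf{CW}$ for any $f_0$. Every $g_0$ with $f_1(1)(g_0)=0$ satisfies $g_0(p)=1$, hence $(f_0,f_1)\models\complbel{1}p$. Yet for any $g_0$ with $g_0(p)=0$ and any $g_1\neq g_1^*$ we have $f_2(1)(g_0,g_1)=0$ while $(g_0,g_1)\not\models p$, so $(f_0,f_1,f_2)\not\models\complbel{1}p$. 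The two ``impossibility'' sets therefore do not generate the same constraints, and the left-to-right half of the biconditional is simply false for $\complbel{i}$.

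What your ``free'' inclusion does deliver --- and what the paper's appendix actually proves --- is only the converse direction: if $f_h(i)(g_0,\ldots,g_{h-1})=0$ then, iterating the contrapositive of the first coherence clause, every extension has $f_k(i)(g_0,\ldots,g_{k-1})=0$; so $(f_0,\ldots,f_k)\models\complbel{i}\psi$ forces $\psi$ on all those extensions, and the induction hypothesis pulls this back to the $h$-prefix. The paper then dismisses the remaining direction as ``similar'', which your own analysis correctly indicates it is not. So the right conclusion to draw from your (accurate) diagnosis is not that a cleverer argument is still needed, but that the strategy cannot close the $\complbel{i}$ case: as the truth clause for $\complbel{i}$ is literally stated (quantifying over all of $Z_{k-1}$, and even if restricted to coherent worlds, since every element of $Z_1$ is coherent), the claimed depth-invariance holds in one direction only for that operator.
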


Let 
$\varphi\in\lang_{\logicext}$
and 
let  $f= (f_0, f_1, \ldots) \in \mathbf{CBS} $
be a coherent belief structure.
We say that $f$ satisfies $\varphi$,
denoted by $f \models \varphi$
if $(f_0, \ldots, f_h) \models \varphi$
where $h = \mathit{depth}(\varphi)$.
It is worth noting that, by Proposition \ref{lemmuccio},
if
$k \geq h$ 
then 
$f \models \varphi$
if and only if $(f_0, \ldots, f_k) \models \varphi$. 
We moreover say that $\varphi$
is valid relative to the class $\mathbf{CBS} $,
denoted by $\models_{\mathbf{CBS}} \varphi$,
if $f \models \varphi$
for every $f \in \mathbf{CBS}$.
We say that 
$\varphi$
is satisfiable relative to the class $\mathbf{CBS}$
if $\not \models_{\mathbf{CBS}}  \neg \varphi$.
Definitions of validity  relative to the class 
 $\mathbf{CBS}_{BC}$  (denoted by $\models_{\mathbf{CBS}_{BC}} \varphi$)
 and satisfiability 
 are defined analogously.

It is straightforward to adapt the proof of \cite{FaginUniv2}[Theorem 3.1]
in order to prove the following equivalence result
between the coherent belief structure semantics
and the Kripke semantics relative to 
the language  $\lang_{\logic}$.

 \begin{theorem}\label{equivKripke2}
Let
$\varphi \in \lang_{\logic}$.
Then,
\begin{itemize}
\item   $\models_{\mathbf{CBS}} \varphi$
if and only if $ \models_{\mathbf{KripkeM}}\varphi$, 
\item      $\models_{\mathbf{CBS}_{BC}} \varphi$
if and only if  $\models_{\mathbf{ReflKripkeM}}\varphi$.
\end{itemize}
 \end{theorem}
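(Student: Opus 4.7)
The plan is to establish, in both directions, a satisfaction-preserving correspondence between pointed multi-relational Kripke models and coherent belief structures. Since validity in either class just means satisfaction at every point of every model, this immediately gives the equivalence of validities. The argument has two symmetric halves (Kripke $\to$ CBS and CBS $\to$ Kripke), each packaged as an explicit construction followed by an induction on the structure of $\varphi \in \lang_\logic$.

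For the direction Kripke $\to$ CBS, fix a Kripke model $M = (W, \Rightarrow_1, \ldots, \Rightarrow_n, \omega)$ and associate to each $w \in W$ the belief structure $f^w = (f^w_0, f^w_1, \ldots)$ defined inductively by $f^w_0(p) = 1$ iff $w \in \omega(p)$, and $f^w_{k+1}(i)(g_0, \ldots, g_k) = 1$ iff there exists $v \in W$ with $w \Rightarrow_i v$ and $(f^v_0, \ldots, f^v_k) = (g_0, \ldots, g_k)$. A routine check against Definition \ref{CoherCond} shows $f^w \in \mathbf{CBS}$: the first coherence clause holds because any witness $v$ at level $h$ also witnesses level $h-1$, and the second holds because $f^v$ itself supplies the required extension at level $h+1$. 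One then proves by induction on $\varphi$ that, for every $k \geq \mathit{depth}(\varphi)$, $(f^w_0, \ldots, f^w_k) \models \varphi$ iff $(M, w) \models \varphi$; the only interesting case is $\impbel{i}\psi$, where the definition of $f^w_k(i)$ makes the CBS quantifier over prefixes collapse to a quantifier over $\Rightarrow_i$-successors of $w$, and the induction hypothesis converts the body into the Kripke clause.

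For the direction CBS $\to$ Kripke, define the canonical Kripke model $M^\star$ whose worlds are the elements of $\mathbf{CBS}$, with $g \in \omega^\star(p)$ iff $g_0(p) = 1$, and $g \Rightarrow^\star_i g'$ iff $g_{k+1}(i)(g'_0, \ldots, g'_k) = 1$ for every $k \in \mathbb{N}_0$. Again by induction on $\varphi$, one obtains $(M^\star, g) \models \varphi$ iff $g \models \varphi$, using Proposition \ref{lemmuccio} to normalise the depth at which the CBS satisfaction is evaluated. In the $\impbel{i}\psi$ step, the nontrivial point is that the quantifier over $\Rightarrow^\star_i$-successors of $g$ exactly matches the quantifier over prefixes $(g'_0, \ldots, g'_{k-1})$ with $g_k(i)(g'_0, \ldots, g'_{k-1}) = 1$; this is where I expect the main obstacle to lie, since it requires that every such prefix actually extends to a full coherent belief structure. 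This extension step is not automatic: one uses the second clause of Definition \ref{CoherCond} iteratively (together with the countable axiom of choice) to build an extension level by level, and the construction only goes through because we are working with coherent structures. Combining the two inductions and taking contrapositives yields $\models_{\mathbf{CBS}} \varphi$ iff $\models_{\mathbf{KripkeM}} \varphi$.

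The second bullet is handled by observing that reflexivity of $\Rightarrow_i$ corresponds precisely to the correctness condition of Definition \ref{CorrectCond}. If $M$ satisfies reflexivity, then taking $v = w$ in the construction of $f^w$ forces $f^w_{k+1}(i)(f^w_0, \ldots, f^w_k) = 1$ for every $k \geq 0$, so $f^w \in \mathbf{CBS}_{BC}$; conversely, correctness of a coherent belief structure $g$ yields $g \Rightarrow^\star_i g$ in $M^\star$, and restricting $M^\star$ to $\mathbf{CBS}_{BC}$ produces a reflexive Kripke model. The two inductions of the previous paragraph then transfer unchanged, giving $\models_{\mathbf{CBS}_{BC}} \varphi$ iff $\models_{\mathbf{ReflKripkeM}} \varphi$.
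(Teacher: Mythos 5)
Your proposal is correct and follows essentially the same route the paper intends: the paper gives no explicit proof of this theorem but points to an adaptation of Fagin et al.'s Theorem 3.1, which is precisely the two-way satisfaction-preserving correspondence you construct, and your map $w \mapsto f^w$ is literally the map $\tau$ the paper uses in its proof of Theorem \ref{teoremiccolo}. Your identification of the prefix-extension step (iterating the second coherence clause to lift a $k$-world accepted by $g_k(i)$ to a full coherent successor structure) as the one non-routine point is exactly where the adaptation of Fagin et al.'s argument does its work.
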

 
 The following theorem
 is a corollary of Theorems \ref{corollario} and
 \ref{equivKripke2}.
 
 \begin{theorem}\label{corollario2}
Let
$\varphi \in \lang_{\logic}$.
Then,
\begin{itemize}
\item   $\models_{ \ucontext}\varphi$ if and only if $\models_{\mathbf{CBS}} \varphi$, 
\item       $\models_{ \corrucontext}\varphi$ if and only if $\models_{\mathbf{CBS}_{BC}} \varphi$.
\end{itemize}
 \end{theorem}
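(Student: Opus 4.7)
The plan is essentially to chain together the two biconditionals provided by Theorems~\ref{corollario} and~\ref{equivKripke2}, since Theorem~\ref{corollario2} is presented as a direct corollary. Concretely, for the first item I would argue: fix $\varphi \in \lang_\logic$ and apply Theorem~\ref{corollario} to obtain $\models_{\ucontext}\varphi \Longleftrightarrow \models_{\mathbf{KripkeM}}\varphi$; then apply Theorem~\ref{equivKripke2} to obtain $\models_{\mathbf{KripkeM}}\varphi \Longleftrightarrow \models_{\mathbf{CBS}}\varphi$; composing these two equivalences yields $\models_{\ucontext}\varphi \Longleftrightarrow \models_{\mathbf{CBS}}\varphi$. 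The second item is completely analogous, using the reflexive versions on the Kripke side: one composes $\models_{\corrucontext}\varphi \Longleftrightarrow \models_{\mathbf{ReflKripkeM}}\varphi$ (from Theorem~\ref{corollario}) with $\models_{\mathbf{ReflKripkeM}}\varphi \Longleftrightarrow \models_{\mathbf{CBS}_{BC}}\varphi$ (from Theorem~\ref{equivKripke2}) to conclude.

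Since the argument is pure transitivity of biconditionals, there is no real obstacle to overcome in the corollary itself; the content is entirely contained in the two theorems being invoked. The only thing to double-check is that both invoked theorems are stated for exactly the same language $\lang_\logic$ (not the richer $\lang_\logicext$), which they are, so no restriction on the syntactic fragment needs to be introduced when chaining. In particular, the non-equivalence phenomenon that appears in Section~\ref{extensions} for the `only believing' extension plays no role here, because Theorem~\ref{corollario2} is limited to $\lang_\logic$, and the matching equivalence with Kripke semantics on both ends is lossless for this fragment.

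Thus the proof proposal reduces to two lines of symbolic chaining, one per item, with an explicit citation of Theorems~\ref{corollario} and~\ref{equivKripke2}. I would present it in a short paragraph rather than a display, as the Kripke semantics plays the role of a shared intermediate reference point between the universal-context semantics and the coherent-belief-structure semantics, and the corollary simply eliminates this intermediate.
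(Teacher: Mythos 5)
Your proposal is correct and matches the paper's own justification exactly: the paper presents Theorem~\ref{corollario2} as a corollary obtained by chaining Theorem~\ref{corollario} with Theorem~\ref{equivKripke2} through the Kripke semantics as the shared intermediate, which is precisely your argument. Nothing further is needed.
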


 The following  theorem
 strengthens  Theorem \ref{corollario2}
 by stating that
 the set of $\lang_{\logicext}$-validites
 relative to the universal
 context $\ucontext$ (resp. $ \corrucontext$)
 is the same as the set of $\lang_{\logicext}$-validites relative to the class $\mathbf{CBS}$
 (resp. $\mathbf{CBS}_{BC}$).

 \begin{theorem}\label{teoremiccolo}
Let
$\varphi \in \lang_{\logicext}$.
Then,
\begin{itemize}
\item   $\models_{ \ucontext}\varphi$ if and only if $\models_{\mathbf{CBS}} \varphi$, 
\item      $\models_{ \corrucontext}\varphi$ if and only if $\models_{\mathbf{CBS}_{BC}} \varphi$.
\end{itemize}
 \end{theorem}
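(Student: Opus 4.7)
I would extend Theorem~\ref{corollario2} from $\lang_{\logic}$ to $\lang_{\logicext}$, proving each biconditional by contraposition. The general (first item) and BC (second item) cases are handled in parallel, the BC variant being obtained by restricting the constructions below to $\corrclassbelbasesimple$ and $\mathbf{CBS}_{BC}$ and verifying that belief correctness is transported by both translations.

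Going from a coherent belief structure $f\in\mathbf{CBS}$ with $f\not\models\varphi$ to a belief base $B^f\in\ucontext$ with $(B^f,\ucontext)\not\models\varphi$, I would set $\states_{B^f}=\{p\in\PROP : f_0(p)=1\}$ and, for every $i\in\AGT$, populate $B^f_i$ with a set of $\langminus$-formulas whose joint satisfaction on $B'$ is equivalent to the condition prescribed by $f$, namely $f_{k}(i)(\tau(B')_0,\ldots,\tau(B')_{k-1})=1$ for every $k\geq 1$ (where $\tau$ is the canonical map defined below). The universality of $\ucontext$ ensures that a suitable collection of witnesses exists. A truth-correspondence lemma $f\models\psi$ iff $(B^f,\ucontext)\models\psi$ is then proved by induction on $\psi\in\lang_{\logicext}$, with Proposition~\ref{lemmuccio} aligning depth truncations at each modal step.

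Going the other way, from $(B,\ucontext)\not\models\varphi$ to a coherent $f\not\models\varphi$ in $\mathbf{CBS}$, I would consider the inductive map $\tau\colon\classbelbasesimple\to\mathbf{CBS}$ defined by $\tau(B)_0(p)=1$ iff $p\in\states_B$ and
\[
\tau(B)_{k+1}(i)(g_0,\ldots,g_k)=1 \ \text{iff}\ \exists B'\in\classbelbasesimple : B\relstate{i}B'\ \text{and}\ \tau(B')_j=g_j\ \text{for}\ j\le k.
\]
Coherence of $\tau(B)$ is immediate: the downward clause of Definition~\ref{CoherCond} uses the same witness $B'$, and the upward clause extends the same witness by one level. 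The truth-correspondence lemma for $\tau$ at the atomic, Boolean, and $\impbel{i}$ cases is a routine structural induction using Proposition~\ref{lemmuccio}.

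The main obstacle lies in the $\complbel{i}$-clause of the second direction. Because $\relstate{i}$ depends on the syntactic content of $B_i$, two belief bases with identical $\tau$-images can differ on whether they are $\relstate{i}$-accessible from a given $B$, and consequently the set-theoretic complement $\ucontext\setminus\relstate{i}(B)$ does not in general match the Boolean complement $\{(g_0,\ldots,g_{k-1}) : \tau(B)_k(i)(g_0,\ldots,g_{k-1})=0\}$ on the belief-structure side. I would circumvent this by exploiting that $\PROP$ is countably infinite while $\PROP(\varphi)$ is finite: a reserve of fresh propositions not occurring in $\varphi$ is used to separate the relevant belief bases in the subtree of depth $\mathit{depth}(\varphi)$ reached from $B$, ensuring that after separation each coherent $k$-world has a unique pre-image under $\tau$ so that the two complements coincide. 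Since the truth of $\varphi$ is invariant under adding fresh propositions to valuations, the resulting coherent belief structure refutes $\varphi$, completing the direction.
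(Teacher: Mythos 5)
Your scaffolding matches the paper's: the same canonical map $\tau$ from belief bases to belief structures, an explicit realization of a given hierarchy by a belief base (your first direction is essentially the paper's intermediate proposition, which builds $B^f$ from formulas $\beta_{i,(f_0,\ldots,f_k)}$ enumerating the designated successor hierarchies, together with the formulas $\mathsf{MB}^k\neg q$ for fresh atoms $q$), and a truth-correspondence/bisimulation argument on top. You also correctly isolate the one step where the extension from $\lang_{\logic}$ to $\lang_{\logicext}$ is non-trivial, namely the $\complbel{i}$ clause. The divergence is in how that step is resolved: the paper restricts \emph{both} semantics to their $\PROP(\varphi)$-restrictions (belief bases in which every fresh atom is commonly disbelieved at every order, belief structures in which every fresh atom is false throughout the hierarchy), proves that $\tau$ is \emph{surjective} onto the restricted class of coherent belief structures, and then argues that the graph of $\tau$ is a bisimulation for the accessibility relations and their complements. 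You instead propose to make $\tau$ \emph{injective} on ``the subtree of depth $\mathit{depth}(\varphi)$ reached from $B$'' by tagging belief bases with fresh atoms, so that the two complements coincide pointwise.

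That fix has a genuine gap. The clause for $\complbel{i}$ at $(B,\ucontext)$ quantifies over $\ucontext\setminus\relstate{i}(B)$, i.e.\ over the complement in the \emph{entire} universal context, which is not a subtree you get to engineer: for every realizable hierarchy the universal context contains infinitely many syntactically distinct belief bases realizing it (e.g.\ $B'_j=\{p\}$ versus $B''_j=\{p\wedge p\}$ induce the same $\relstate{j}$-successors, hence the same $\tau$-image, yet satisfy different formulas $\expbel{j}\alpha$ and so differ on $\relstate{i}$-accessibility from a given $B$). No choice of fresh atoms removes these variants from $\ucontext\setminus\relstate{i}(B)$, so ``each coherent $k$-world has a unique pre-image under $\tau$'' is unattainable over the domain that $\complbel{i}$ actually ranges over; the mismatch between the set-theoretic complement on the belief-base side and the Boolean complement $\{(g_0,\ldots,g_{k-1})\suchthat \tau(B)_k(i)(g_0,\ldots,g_{k-1})=0\}$ therefore survives your construction. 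A second, unaddressed point: on the belief-structure side $\complbel{i}$ quantifies over \emph{all} of $Z_{k-1}$, including tuples that are incoherent or otherwise not in the image of $\tau$, so injectivity alone would not make the two complements match even if it were achievable. What is needed is an argument at the level of hierarchies rather than of individual belief bases --- showing that the two complements refute the same formulas of depth $\le k-1$ --- which is what the paper's restriction-plus-surjectivity route is designed to deliver.
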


 More generally, 
 not even 
the  epistemic language
 $ \lang_{\logicext}$ ---
 which extends  the language $ \lang_{\logic}$
by the  notions
of
`believing at most'
and `only believing' ---
can distinguish
 the 
 belief structure
 semantics  \`a la Fagin et al.
 from the universal
 context
 semantics
exploiting the belief base abstraction.

\section{Model checking}\label{model checking}

\newcommand{\uncertain}[1] {\bigcirc_{#1}   }

The notions of $\alpha$-context
and universal context 
defined in Section  \ref{univ1} (Definition \ref{Induced})
allow us to offer a  compact formulation of the model
checking problem for the formulas in the language $\lang_\logic$.\footnote{ Other
compact formulations of the epistemic logic
model checking have been proposed in the literature.
For instance
 \cite{LomuscioRaimondi2015} provide
 a semantics for epistemic logic  model checking based on the concept of interpreted
 system, while the approach
 by  \cite{BenthemEijckGattingerSu2015,DBLP:conf/aamas/HoekIW12,DBLP:conf/kr/CharrierHLMS16} builds
 on propositional observability. 
  }

\smallskip

{\centering
\fbox{
\begin{minipage}{20em}
\noindent \underline{$\alpha$-context model checking}\\
\emph{Given}: $\varphi \in \lang_\logic$,
$\alpha \in   \langminus$ and a
finite $ B  \in \classbelbasesimple$.\\
\noindent\emph{Question}: Do we have
$ (B,\classbelbasesimple_\alpha ) \models \varphi $?
\end{minipage}
}

}

\smallskip
\noindent where the multi-agent belief base (MBB) $ B=(B_1, \ldots, B_n, \states)$
is said to be finite  if $\states$
and every $B_i$
are finite.

Note that, if $\alpha= \top$, then model checking consists
in verifying whether $\varphi$
is true at a given finite MBB of the universal
context, that is, in verifying whether $ (B,\ucontext ) \models \varphi$
for a specific finite $B \in \ucontext$. 
%

The model checking problem
for   multi-agent epistemic
logic
K$^n$
interpreted relative
to the standard multi-relational Kripke semantics
of Definition \ref{KripkeStructures}
is known to be P-complete
with respect to the size of the input
formula to be checked and the size of the model \cite{GradelOtto}.
In this section, 
we are going to show that the
previous 
compact formulation of the model checking problem 
for the formulas in  $\lang_\logic$ is PSPACE-hard,
the same complexity as the corresponding satisfiability problem.
Our result highlights that  the gains in terms of
compactness
of the model checking problem
are counterbalanced by losses on the complexity side.

We provide a polynomial reduction of true quantified boolean formulas (TQBF)
to model checking for formulas in  $\lang_\logic$.

Let 
us assume that a quantified boolean formula $\chi$ is in (not necessarily closed) prenex
normal form and can be written as follows:
\begin{align*}
\lambda . \pi
\end{align*}
where $ \pi$
is a propositional formula
and $\lambda$
is a (possibly empty)
sequence
 $Q_0 p_0 \ldots Q_m p_m$
such that $\{p_0, \ldots, p_m\} \subseteq \PROP(\pi) $
and $Q_0, \ldots, Q_m \in \{\exists, \forall\}$, with $\PROP(\pi)$
denoting the set of atoms occurring in $\pi$.
If $\{p_0, \ldots, p_m\}= \PROP(\pi) $
then $\chi$
is said to be \emph{closed},
as it has no free variables.
The \emph{length}
of $\chi$
is defined to be
the length of the sequence  $p_0 \ldots p_m$.
For notational convenience,
when $\lambda$
is non empty,
we write $\lambda[k]$
to denote the propositional
variable
in the $k$-th
position
in the sequence
 $p_0 \ldots p_m$.

Let us suppose that the language of quantified boolean formulas
is
built
over the set of atomic variables $\PROP$.
Moreover,
let us define the following translation
from the language of quantified boolean formulas
to the language $\lang_\logic(\PROP,\{1\})$:
\begin{align*}
\mathit{tr}(p)  =& \text{ for } p \in \PROP  \\
\mathit{tr}\big( \forall p \lambda. \pi  \big)  = &
[\![\lambda]\!] \mathit{tr}(\lambda. \pi)
\end{align*}
where
\begin{align*}
 & [\![\lambda]\!] \psi \eqdef \impbel{1} (   \uncertain{1}\lambda[0]  \rightarrow \psi ) \\
  \end{align*}
  and
  \begin{align*}
   \uncertain{1}\lambda[0]  \eqdef & \impbelposs{1} \lambda[0] \wedge \impbelposs{1} \neg \lambda[0] & \text{ if } \lambda \text{ is  non-empty}  \\
   \uncertain{1}\lambda[0] \eqdef & \top & \text{ otherwise }
\end{align*}
and where the Boolean connectives are translated homomorphically.
We define the dual operator of $[\![\lambda]\!]$
as follows:
\begin{align*}
\langle\!\langle\lambda\rangle\!\rangle \psi \eqdef  \neg [\![\lambda]\!] \neg \psi.
  \end{align*}

For every
quantified boolean
formula $\chi=\lambda. \pi = Q_0 p_0 \ldots Q_m p_m . \pi$ and
for every $\states \subseteq \ATM$,
we define:
\begin{align*}
\Sigma_{\chi,\states}  =&  \big(  ( \PROP(\pi)\setminus \{ \lambda[0]  \}  ) \cap    \states     \big) \cup \\
&  \{  \neg q \suchthat q \in (\PROP(\pi)\setminus \{ \lambda[0]  \}   )   \setminus \states      \} \cup \\
  & \bigcup_{1 \leq k \leq m } \{  \expbel{1}^{k-1} \bigwedge_{q \in \PROP(\pi)\setminus \{ \lambda[k]  \}   }  (\triangle_{1} q \vee  \triangle_{1} \neg q)          \}
\end{align*}
where $\expbel{1}^{k-1} \alpha$
is inductively defined as follows:
\begin{align*}
\expbel{1}^{k-1} \alpha & \eqdef \alpha & \text{ if } k = 1 \\
\expbel{1}^{k-1} \alpha & \eqdef \expbel{1}\expbel{1}^{k-2}\alpha &  \text{ if } k >1
\end{align*}

Moreover,
let us define
 \begin{align*}
B^{(\chi,\states)}  =  \big( B_1^{(\chi,\states)} ,   \states  \big)
 \end{align*}
where
\begin{align*}
B_1^{(\chi,\states)}  & = \Sigma_{\chi,\states} & \text{ if } \lambda  \text{ is non-empty}\\
B_1^{(\chi,\states)}  &= \emptyset &  \text{ otherwise}
  \end{align*}

The following lemma
is crucial  for proving
the main result
of this section
about complexity of model checking  
for formulas in $\lang_\logic$.

\begin{lemma}\label{intermediatelemma2}
Let  $\states \subseteq \PROP$
and
let
$\chi=\lambda. \pi$ be a quantified boolean formula.
Then,
$\states \models \chi$
if and only if
$\big( B^{(\chi,\states)}, \ucontext \big) \models \mathit{tr}( \chi) $.
\end{lemma}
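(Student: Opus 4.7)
The proof proceeds by induction on the length of the quantifier prefix $\lambda$.

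Base case: $\lambda$ is empty, so $\chi=\pi$ is propositional, $\mathit{tr}(\chi)=\pi$ (by the homomorphic translation of Boolean connectives and the identity on atoms), and $B^{(\chi,\Sigma)}=(\emptyset,\Sigma)$. Since $\pi$ contains no epistemic operators, its satisfaction at $(B^{(\chi,\Sigma)},\ucontext)$ reduces to $\Sigma\models\pi$ directly by the first three clauses of Definition \ref{truthcond2}.

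Inductive step: Write $\chi=Q_0 p_0.\lambda'.\pi$ with tail $\lambda'=Q_1 p_1\ldots Q_m p_m$. I present the case $Q_0=\forall$; the case $Q_0=\exists$ is dual via $\langle\!\langle\lambda'\rangle\!\rangle=\neg[\![\lambda']\!]\neg$. By the QBF semantics, $\Sigma\models\chi$ iff both $p_0$-variants $\Sigma^+=\Sigma\cup\{p_0\}$ and $\Sigma^-=\Sigma\setminus\{p_0\}$ satisfy $\lambda'.\pi$; by the induction hypothesis this is equivalent to $(B^{(\lambda'.\pi,\Sigma^\pm)},\ucontext)\models\mathit{tr}(\lambda'.\pi)$ for both choices. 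On the modal side, unfolding $\mathit{tr}$ and $[\![\lambda']\!]$, $(B^{(\chi,\Sigma)},\ucontext)\models\mathit{tr}(\chi)$ amounts to saying: for every $B'\in\ucontext$ with $B^{(\chi,\Sigma)}\relstate{1}B'$ such that $(B',\ucontext)\models\uncertain{1}\lambda'[0]$, we have $(B',\ucontext)\models\mathit{tr}(\lambda'.\pi)$. The task is to show these two quantifications agree.

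The core of the argument is a structural analysis of the relevant alternatives $B'$. Unpacking $\Sigma_{\chi,\Sigma}$: the propositional literals pin the state $\Sigma'$ of every alternative to agree with $\Sigma$ on $\PROP(\pi)\setminus\{p_0\}$, so $\Sigma'$ varies only in $p_0$; and the $k$-th opinionatedness formula $\triangle_1^{k-1}(\bigwedge_{q\in\PROP(\pi)\setminus\{\lambda[k]\}}(\triangle_1 q\vee\triangle_1\neg q))$ propagates, forcing at every epistemic depth $k$ an opinionation on $\PROP(\pi)\setminus\{\lambda[k]\}$. In particular $B'_1$ is opinionated on $\PROP(\pi)\setminus\{\lambda[1]\}$ and contains the shifted formulas $\triangle_1^{k-2}(\bigwedge\ldots)$ for $2\leq k\leq m$ --- precisely the defining content of the belief base of $B^{(\lambda'.\pi,\Sigma^*)}$ one level down, where $\Sigma^*$ is the propositional valuation extracted from $B'_1$'s opinion on $\PROP(\pi)\setminus\{\lambda'[0]\}$. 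The uncertainty filter $\uncertain{1}\lambda'[0]$ selects those $B'$ whose opinion on $\lambda'[0]$ is undetermined, so that deeper epistemic alternatives mirror the next quantification in $\lambda'$.

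The main obstacle is the invariance step: the outer $\impbel{1}$ ranges over \emph{all} $B'\in\ucontext$ meeting the constraints, including ``non-canonical'' ones carrying extra formulas in $B'_1$. I would handle this by a sub-induction on the modal depth of $\mathit{tr}(\lambda'.\pi)$, showing that its truth value at $(B',\ucontext)$ is determined by the propagated opinionatedness skeleton together with the forced propositional literals, so that any extra content in $B'_1$ cannot perturb the evaluation. The $\Leftarrow$ direction then follows by specialising the outer universal quantifier to the canonical witnesses $B'=B^{(\lambda'.\pi,\Sigma^\pm)}$ and applying the induction hypothesis; the $\Rightarrow$ direction combines the invariance with the induction hypothesis applied to $\lambda'.\pi$ at each $\Sigma^\pm$. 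The most delicate bookkeeping throughout is tracking the indexing of $\lambda[k]$ across the recursion, so that at each epistemic depth the correct variable is left undetermined to match the corresponding quantifier in the prefix.
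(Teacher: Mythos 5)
Your proposal is correct and takes essentially the same route as the paper's proof: an outer induction on the length of the quantifier prefix, with the canonical witnesses $B^{(\lambda'.\pi,\states\cup\{p\})}$ and $B^{(\lambda'.\pi,\states\setminus\{p\})}$ handling one direction and an invariance argument handling the non-canonical alternatives ranged over by the outer modality. The sub-induction on the modal depth of $\mathit{tr}(\lambda'.\pi)$ that you sketch is precisely the paper's auxiliary Lemma~\ref{intermediatelemma1}, which establishes that $\mathit{tr}(\chi)$ has the same truth value at any two alternatives that include $B_1^{(\chi,\states)}$ in their belief base, agree with $\states$ on $\PROP(\pi)$, and satisfy $\uncertain{1}\lambda[0]$.
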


Thanks to   Lemma \ref{intermediatelemma2}
and
 the fact that
the TQBF problem (i.e., the problem
of checking whether a quantified
boolean
formula is true in a given valuation)
is PSPACE-hard  \cite{StockmeyerQBF},
we have a PSPACE-hardness result
for model checking of $\lang_\logic$-formulas.
Indeed,
let $\chi= Q_0 p_0 \ldots Q_m p_m . \pi$
be a quantified boolean formula. Then, 
the sizes of
$B_1^{(\chi,\states)}$ and $ \mathit{tr}(\chi) $
are polynomial  in $m$.
Therefore, by Lemma \ref{intermediatelemma2}
we
have a polynomial reduction
of the TQBF problem
into
$\logic$-model checking.
Since the former
is PSPACE-hard,
then the latter is PSPACE-hard too.
\begin{theorem}\label{Th1}
The $\alpha$-context model checking problem
for formulas in $\lang_\logic$ is PSPACE-hard.
\end{theorem}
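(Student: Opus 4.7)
The plan is to establish PSPACE-hardness of the $\alpha$-context model-checking problem by a polynomial-time many-one reduction from TQBF, relying on Lemma \ref{intermediatelemma2} as the correctness certificate. Since the lemma already does the semantic heavy lifting, the theorem itself is essentially a routine verification that the reduction is indeed polynomial in the input size and a standard invocation of Stockmeyer's hardness theorem.

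Concretely, given an input instance $(\chi, \states)$ to TQBF, where $\chi = Q_0 p_0 \ldots Q_m p_m . \pi$ is a quantified Boolean formula in prenex form and $\states \subseteq \PROP$ a valuation of the free variables, I would output the $\alpha$-context model-checking instance consisting of the formula $\mathit{tr}(\chi)$, the integrity constraint $\alpha = \top$, and the finite MBB $B^{(\chi,\states)}$. By Lemma \ref{intermediatelemma2}, $\states \models \chi$ holds precisely when $\big( B^{(\chi,\states)}, \ucontext \big) \models \mathit{tr}(\chi)$, so any decision procedure for the $\alpha$-context model-checking problem transfers into a decision procedure for TQBF on the same instance.

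The next step is a size-counting argument. Inspecting $\Sigma_{\chi,\states}$, the purely propositional part contributes $O(|\PROP(\pi)|)$ literals, while the $m$ higher-order formulas $\expbel{1}^{k-1} \bigwedge_{q} (\triangle_{1} q \vee \triangle_{1} \neg q)$ each have size $O(m + |\PROP(\pi)|)$, so the belief base $B^{(\chi,\states)}$ has total size $O(m \cdot (m + |\PROP(\pi)|))$. The translated formula $\mathit{tr}(\chi)$ is obtained by inserting one $\impbel{1}$-wrapped implication per quantifier, each with a constant-size $\uncertain{1}$-guard, yielding a formula of size $O(|\chi|)$. Both outputs are therefore polynomial in the size of the original TQBF instance, and the mapping itself is clearly computable in polynomial time.

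Finally I would invoke the fact that TQBF is PSPACE-hard and that polynomial-time many-one reductions preserve PSPACE-hardness to conclude. I expect the genuine technical difficulty to lie entirely within Lemma \ref{intermediatelemma2}, not in Theorem \ref{Th1} itself: the subtle point is how the translation $\mathit{tr}$ faithfully encodes quantifier alternation through the interplay between the $\impbel{1}$ modality, the uncertainty guard $\uncertain{1}\lambda[k]$, and the carefully chosen higher-order formulas in $\Sigma_{\chi,\states}$, which together compel the universal context to provide exactly the branching structure needed to simulate the TQBF evaluation game. Once that lemma is in place, the theorem proper reduces to the mechanical size accounting sketched above.
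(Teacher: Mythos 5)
Your proposal matches the paper's own argument: the paper likewise obtains Theorem \ref{Th1} by reducing TQBF to $\alpha$-context model checking with $\alpha=\top$ via the instance $\big(B^{(\chi,\states)},\ucontext\big)$ and $\mathit{tr}(\chi)$, citing Lemma \ref{intermediatelemma2} for correctness, noting that the sizes of $B_1^{(\chi,\states)}$ and $\mathit{tr}(\chi)$ are polynomial in $m$, and invoking Stockmeyer's PSPACE-hardness of TQBF. Your more explicit size accounting is a harmless elaboration of the same reasoning, and you correctly locate the real work in the lemma rather than in the theorem itself.
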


From \cite{DBLP:journals/ai/HalpernM92} we know that satisfiability
checking 
for $\lang_\logic$-formulas
relative to multi-relational 
Kripke models
of Definition \ref{KripkeStructures} is PSPACE-complete.
Thus, thanks to Theorem \ref{corollario},
we can conclude that
satisfiability
checking 
for $\lang_\logic$-formulas
relative to the universal context $\ucontext$
is PSPACE-complete as well.
Consequently, Theorem \ref{Th1}
also shows that
the compact version of
model checking for 
$\lang_\logic$-formulas
exploiting the notion
of $\alpha$-context is at least as complex
as  satisfiability  checking for
$\lang_\logic$-formulas.

\section{Conclusion}\label{conclude}

We have offered 
a novel representation
of the universal
epistemic
model
which does not require an inductive
construction
of the agents' belief hierarchies.
Our solution
exploits the notion
of belief
base, a natural
abstraction for
representing epistemic
attitudes
of rational players in interactive situations,
widely used in the area of knowledge representation and reasoning (KR). 
Our lower bound  complexity result
for model checking relative
to the universal
epistemic model
clearly displays the trade-off
between the compactness of the model checking problem
representation
and its computational complexity. 

Directions of future research are manifold. As emphasized in Section \ref{univtypespace},
Fagin et al.'s original definition of 
the universal
epistemic model
applies  to the epistemic logic S5$^n$
which includes
principles of 
positive and negative introspection for knowledge
(i.e., if an agent knows that $\varphi$/does not know that $\varphi$,
then she knows that she knows that $\varphi$/does not know that $\varphi$).
The logics we presented in this paper do not make 
any assumption about introspection for belief or knowledge.
Future work will be devoted to extend our analysis
of the universal epistemic
model
 to variants of epistemic
logic with introspection. 
We also plan to extend our comparative
analysis of the different
semantics for epistemic logic 
to more expressive epistemic languages
including common belief and distributed belief operators.
Last but not least,
we plan to explore the connection between  
the representation
of agents' belief hierarchies
using the belief
base abstraction
and  type spaces with finite depth of reasoning
as defined in \cite{KetsTypes}.
Indeed, the belief base abstraction allows to naturally define
the notion of $k$-level (resource-bounded) reasoner, as an agent whose belief base contains explicit beliefs
of at most  order $k$. This is clearly related to the idea of type spaces with finite depth of reasoning
\`a la Kets.

\section*{Acknowledgements}

Support from the
ANR project CoPains 
(``Cognitive Planning in Persuasive Multimodal Communication'')
	and the
 ANR-3IA Artificial and Natural Intelligence Toulouse Institute is gratefully acknowledged.
 
\bibliographystyle{eptcs}

\appendix

\section{Proofs}

This appendix presents a selection
of the proofs of the technical
results presented in the paper.

\subsection{Proof of Theorem \ref{KripkeThe}}

\begin{proof}
We only prove the first item. The proof of the second item is similar.

In order to prove the right-to-left direction
we show that if $\varphi$
is satisfiable
for $\classbelbase$
then it is satisfiable for the class of multi-relational Kripke models.
Let $(B, \iconstraint)\in\classbelbase $ such that $(B, \iconstraint) \models \varphi $.
We build the multi-relational Kripke model $M=(W, \Rightarrow_1, \ldots , \Rightarrow_n, \omega )$
corresponding to $(B, \iconstraint) $
as follows:
\begin{itemize}
\item $W= \{w_{B'} \suchthat B' \in \{B\}  \cup \iconstraint \}$,
\item for every $i \in \AGT$ and for every $w_{B'},w_{B''} \in W$, $w_{B'} \Rightarrow_i w_{B''}$ iff  $B'  \relstate{i} B''$,
\item for every $p \in \ATM$, $\omega(p) = \{w_{B'}  \in W \suchthat  B' \models p \}$,
\end{itemize}
 Clearly, we have $(M, w_B) \models \varphi$
 iff $(B, \iconstraint ) \models \varphi $.
 Thus, $(M, w_B) \models \varphi$ 
 since we supposed that $(B, \iconstraint ) \models \varphi $.
 
 As for the left-to-right direction suppose 
 $\varphi$
 is satisfiable for  the class of 
 multi-relational Kripke models. We know that the
 multimodal logic K$^n$
 interpreted relative to 
  multi-relational Kripke models has the finite model property.
  Consequently, there exists 
a \emph{finite} multi-relational Kripke model $M=(W, \Rightarrow_1, \ldots , \Rightarrow_n, \omega )$
 and $w \in W$ such that  $(M, w) \models \varphi$.
 Let $\mathit{name}: W \rightarrow \PROP \setminus \PROP(\varphi)$
 be an injective function.
 Such an injection  exists since
 the set $W$ is finite and 
 the
  the set $\PROP$
 is assumed to be infinite. 
 We define the context $\iconstraint= \{B^v \suchthat v \in W \}$
where, for every $v \in W$ and for every $i \in \AGT$,
\begin{align*}
B^v_i= \{ \bigvee_{u \in W \suchthat v \Rightarrow_i u }  \mathit{name}(u)  \},
\end{align*}
and $\states^v= \big( \omega(v) \cap \ATM(\varphi) \big) \cup \{\mathit{name}(v)\}$.
By induction on the structure of $\varphi$, it is routine task to show that $(B^w, \iconstraint )\models \varphi$
iff $(M,w) \models \varphi$.
Thus,  $(B^w, \iconstraint )\models \varphi$ since $(M,w) \models \varphi$.
\end{proof}

\subsection{Proof of Theorem \ref{teoremone}}\label{appsec}

\begin{proof}
We prove the first item,
as the proof of the second item is analogous.

Left-to-right direction is obvious. 
As for the right-to-left direction,
we prove that if $\varphi$
is satisfiable relative to the class $\classbelbase$,
then it is satisfiable relative to the universal context.

Let $(B, \iconstraint)\in\classbelbase $ such that $(B, \iconstraint) \models \varphi $.
We build the multi-relational Kripke model $M=(W, \Rightarrow_1, \ldots , \Rightarrow_n, \omega )$
corresponding to $(B, \iconstraint) $
as follows:
\begin{itemize}
\item $W= \{w_{B'} \suchthat B' \in \{B\}  \cup \iconstraint \}$,
\item for every $i \in \AGT$ and for every $w_{B'},w_{B''} \in W$, $w_{B'} \Rightarrow_i w_{B''}$ iff  $B'  \relstate{i} B''$,
\item for every $p \in \ATM$, $\omega(p) = \{w_{B'}  \in W \suchthat  B' \models p \}$,
\end{itemize}
where the interpretation of formulas 
 relative to
a  multi-relational Kripke model
$M=(W, \Rightarrow_1, \ldots , \Rightarrow_n, \omega )$
and a world $w$ in $W$ was defined in Section \ref{Kripke}.
Clearly, we have $(M, w_B) \models \varphi$
iff $(B,  \iconstraint) \models \varphi $. Thus, 
we have
$(M, w_B) \models \varphi$ since $(B,  \iconstraint) \models \varphi $.

In what follows, for notational convenience we denote elements of $W$
by $w, v , u , \ldots$

We define the filtration of $M$.
Let $\Sigma \subseteq \lang_\logic$
be an arbitrary finite set of formulas which is closed under subformulas. (Cf. Definition 2.35 in \cite{Bla01} for
a definition of subformulas closed set of formulas.)
Let the equivalence relation $\equiv_\Sigma$ on $W$ be defined as follows. For all $w,v \in W$:
\begin{align*}
w \equiv_\Sigma v \text{ iff } \forall \varphi \in \Sigma : (M,w) \models \varphi \text{ iff } (M,v) \models \varphi.
\end{align*}
Let $|w |_\Sigma$
be the equivalence class of the world $w$ with respect to the equivalence relation $\equiv_\Sigma$.

We define $W_\Sigma$ to be the filtrated set of worlds with respect to $\Sigma$:
\begin{align*}
W_\Sigma = \{ |w |_\Sigma \suchthat w \in W \}.
\end{align*}
Clearly, $W_\Sigma$
is a finite set.

Let us define the filtrated valuation function $\omega_\Sigma $. For every $p \in \PROP$,
we define:
\begin{align*}
\omega_\Sigma (p) & = \{  |w |_\Sigma \suchthat (M,w) \models p  \}  & \text{if } p \in \PROP(\Sigma)  \\
\omega_\Sigma (p)  & = \emptyset &  \text{otherwise}
\end{align*}
with $\PROP(\Sigma)  = \bigcup_{\psi \in \Sigma} \PROP(\psi) $.

Finally,
for every $i \in \AGT$,
we define
agent $i$'s accessibility relation as follows:
\begin{align*}
\Rightarrow_{i,\Sigma  }= \{ (|w |_\Sigma  , |v |_\Sigma  ) \suchthat |w |_\Sigma  , |v |_\Sigma \in W^\Sigma \text{ and } w  \Rightarrow_i v  \}.
\end{align*}

Let us define:
\begin{align*}
\Rightarrow_\Sigma  = \bigcup_{i \in \AGT} \Rightarrow_{i,\Sigma  }.
\end{align*}


The model
$(W_\Sigma , \Rightarrow_{1,\Sigma  },\ldots, \Rightarrow_{n,\Sigma  }, \omega_\Sigma )$
is the smallest filtration of $M$
under $\Sigma$.
Let $ M_{\{\varphi\}}= (W_{\{\varphi\}} , \Rightarrow_{1,\{\varphi\}},\ldots,\Rightarrow_{n,\{\varphi\}}, \omega_{\{\varphi\}} )$.
Clearly, we have $(M_{\{\varphi\}},  |w_B |_{\{\varphi\}}   ) \models \varphi$
since $(M, w_B) \models \varphi$.

In what follows, for notational convenience we denote elements of $W_{\{\varphi\}}$
by $x ,y, \ldots$

The next step of the proof consists in unraveling the finite model $ M_{\{\varphi\}}$
up to the modal depth of $\varphi$.
 
 Let
 \begin{align*}
\mathit{Seq}(\varphi)= \{(x_0, \ldots, x_{k} ) \suchthat 
k \leq \mathit{depth}(\varphi) \text{ and }
x_0, \ldots, x_{k} \in W_{\{\varphi\}}\}
\end{align*}
be the set of sequences
of worlds in $W_{\{\varphi\}}$
of length at most $\mathit{depth}(\varphi)$.
Elements of $\mathit{Seq}(\varphi)$
are denoted by $\overrightarrow{x}, \overrightarrow{y}, \ldots$
The length of the sequence $\overrightarrow{x}
$ is denoted by $\mathit{length}(\overrightarrow{x})$.
For every $0 \leq k \leq \mathit{length}(\overrightarrow{x})$,
$\overrightarrow{x}[k]$
denotes the $k$-th element in $\overrightarrow{x}$
while $ \overrightarrow{x}[\mathit{last}]$
denotes the last element in $ \overrightarrow{x}$.
Let $\overrightarrow{x}=(x_0, \ldots, x_{k})$,
we write $ \overrightarrow{x}.y$
to denote the sequence 
$(x_0, \ldots, x_{k} , y)$.
 
We define
 the tree-like multi-relational Kripke model $M'=(W', \Rightarrow_1', \ldots , \Rightarrow_n', \omega')$
 as follows:
 
 \begin{itemize}
 \item $W'= \{\overrightarrow{x}\in \mathit{Seq}(\varphi)
 \suchthat \overrightarrow{x}[0]= |w_B |_{\{\varphi\}} \text{ and } \forall 0 \leq k < \mathit{length}(\overrightarrow{x}), \overrightarrow{x}[k]  \Rightarrow_{\{\varphi\}} \overrightarrow{x}[k+1]  \}$,
 
 \item
 for every $i \in \AGT$
 and for every $\overrightarrow{x}, \overrightarrow{y} \in W'$,
  $\overrightarrow{x} \Rightarrow_i'  \overrightarrow{y}$ iff $\exists y \in W_{\{\varphi\}}$ such that $ \overrightarrow{y}= \overrightarrow{x}.y $
 and $ \overrightarrow{x}[\mathit{last}] \Rightarrow_{i,\{\varphi\}} y$,

 \item for every $p \in \ATM$, $\omega '(p) = \{\overrightarrow{x} \in W' \suchthat   \overrightarrow{x}[\mathit{last}] \in \omega_{\{\varphi\}} (p) \}$.

  \end{itemize}
  
  Clearly, the tree-like model $M'$
  is finite.

 By induction on the structure of $\varphi$,
 it is routine exercise to check that $(M', \overrightarrow{ x_0} ) \models \varphi$
 iff \\$(M_{\{\varphi\}},  |w_B |_{\{\varphi\}} ) \models \varphi$,
with $\overrightarrow{ x_0}  = (  |w_B |_{\{\varphi\}})$.
Thus,  $(M', \overrightarrow{ x_0} ) \models \varphi$ since 
$(M_{\{\varphi\}},  |w_B |_{\{\varphi\}} ) \models \varphi$.

Now, let us denote by $\mathit{Term}_{M'}$
the set of terminal nodes in the tree $M'$.
That is, $\mathit{Term}_{M'} = \{ \overrightarrow{x} \in W' \suchthat
\forall  \overrightarrow{y} \in W' \text{ , } \overrightarrow{x} \not \Rightarrow' \overrightarrow{y}  \}$
with $\Rightarrow' = \bigcup_{i \in \AGT} \Rightarrow_i'$.

We define $k$-level explicit mutual belief that $\alpha$ in an inductive way as follows:
  \begin{align*}
  \mathsf{EB}^0\alpha & \eqdef \alpha\\
     \mathsf{EB}^{k+1}\alpha &\eqdef           \mathsf{EB} \ \mathsf{EB}^{k}\alpha\\
          \mathsf{MB}^{k}\alpha &\eqdef        \bigwedge_{0 \leq h \leq k }   \mathsf{EB}^h \alpha 
    \end{align*}
    where $ \mathsf{EB} \alpha \eqdef \bigwedge_{i \in \AGT} \expbel{i} \alpha $.

We define the labelling function $L$
over nodes in $W'$ as follows:
  \begin{align*}
  L( \overrightarrow{x}) = & \bigwedge_{p \in \mathit{Val}(\varphi , \overrightarrow{x})}  p \wedge   \bigwedge_{p \in  \ATM(\varphi) \setminus \mathit{Val}(\varphi , \overrightarrow{x})} \neg p  
  \wedge 
  \bigwedge_{i \in \AGT} \expbel{i} \bot \\
  & \text{ if }  \overrightarrow{x} \in \mathit{Term}_{M'}  \\
    L( \overrightarrow{x}) = & \bigwedge_{p \in \mathit{Val}(\varphi , \overrightarrow{x})}  p \wedge  \bigwedge_{p \in  \ATM(\varphi) \setminus \mathit{Val}(\varphi , \overrightarrow{x})} \neg p  
    \wedge   \bigwedge_{i \in \AGT} \expbel{i} \bigvee_{ \overrightarrow{y} \suchthat \overrightarrow{x} \Rightarrow_i'  \overrightarrow{y} } L( \overrightarrow{y})\\
 &   \text{ if }  \overrightarrow{x} \in W' \setminus \mathit{Term}_{M'}  
    \end{align*}
where
$
\mathit{Val}(\varphi , \overrightarrow{x})= \{p \in \ATM(\varphi)\suchthat (M', \overrightarrow{x} ) \models p \}
$
is the set of atoms in $\varphi$
which are true at $\overrightarrow{x}$.
Note that every $L( \overrightarrow{x})$
is a (finitary) formula of the language $\langminus$.

The labelling function is used to construct,
for every $i\in \AGT$,
the 
$\langminus$-formula which corresponds to the ``tree'' of valuations  rooted in $ (  |w_B |_{\{\varphi\}})$.

Let $\overrightarrow{ x_0}  = (  |w_B |_{\{\varphi\}})$ and
let $B' = (B_1', \ldots, B_n', \states')$
such that:

\begin{itemize}
\item for every $i \in \AGT$, $B_i' =
\bigcup_{ p \in \PROP \setminus \PROP(\varphi) } \{  \mathsf{MB}^{\mathit{depth}(\varphi)} \neg p \} \cup
 \{ \bigvee_{ \overrightarrow{y} \in W' \suchthat \overrightarrow{ x_0} \Rightarrow_i'  \overrightarrow{y} } L( \overrightarrow{y})  
  \}$,
\item $\states' = \{p \suchthat \overrightarrow{ x_0} \in \omega '(p) \}$.
\end{itemize}

Furthermore, let us define the (universal) multi-relational Kripke model $M''=(W'', \Rightarrow_1'', \ldots , \Rightarrow_n'', \omega'' )$
corresponding to the universal context $\ucontext$
as follows: (i) $W''= \{w_{B''} \suchthat B'' \in \ucontext \}$,
(ii) for every $i \in \AGT$ and for every $w_{B''},w_{B'''} \in W$, $w_{B''} \Rightarrow_i w_{B'''}$ iff  $B''  \relstate{i} B'''$,
and (iii) for every $p \in \ATM$, $\omega(p) = \{w_{B''}  \in W \suchthat  B''\models p \}$.

By construction of $B' $,
we can show that $(M'', w_{B'} )$
and $(M', \overrightarrow{ x_0} )$
with $\overrightarrow{ x_0}  = (  |w_B |_{\{\varphi\}})$
are $\mathit{depth}(\varphi)$-bisimilar \cite[Definition 2.30]{Bla01}. Thus, 
$(M'', w_{B'} ) \models \varphi$
since
 $(M', \overrightarrow{ x_0} ) \models \varphi$.
 It follows that 
 $(B',  \ucontext) \models \varphi$,
 since clearly $(M'', w_{B'} ) \models \varphi$ iff $(B',  \ucontext) \models \varphi$.
\end{proof}

\subsection{Proof of Proposition \ref{lemmuccio}}

\begin{proof}
The proof is by induction on the structure of the formulas. 
Boolean cases are trivial.
The case $\varphi=\impbel{i} \psi$
is proved in the same way as \cite[Lemma 2.5]{{FaginUniv2}}.
Let us prove the case $\varphi=\complbel{i} \psi$. 
Assume that 
$k \geq h= \mathit{depth}(\varphi) $
and
$(f_0, \ldots, f_{k})   \models\complbel{i} \psi$.
Since $\varphi=\impbel{i} \psi$, we have $h \geq 1$.
Moreover, let $(g_0, \ldots, g_{h-1}) \in Z_{h-1}$
such that $f_h(i)(g_0, \ldots, g_{h-1})=0$.
By Definition \ref{CoherCond},
for all $(g_0, \ldots, g_{k-1}) \in Z_{k-1}$,
 $f_k(i)(g_0, \ldots, g_{k-1})=0$.
Since $(f_0, \ldots, f_{k})   \models\complbel{i} \psi$,
it follows by definition
that 
$(g_0, \ldots, g_{k-1}) \models  \psi$
for all $(g_0, \ldots, g_{k-1}) \in Z_{k-1}$.
Thus, by induction hypothesis, 
$(g_0, \ldots, g_{h-1}) \models \psi$.
It follows that, for every 
$(g_0, \ldots, g_{h-1}) \in Z_{h-1}$
such that $f_h(i)(g_0, \ldots, g_{h-1})=0$,
$(g_0, \ldots, g_{h-1}) \models \psi$.
Thus, 
$(f_0, \ldots, f_{h})   \models\complbel{i} \psi$.
The proof of the other direction is similar.
\end{proof}

\subsection{Proof of Theorem \ref{teoremiccolo}}

\begin{proof}
We only prove the first item. The second item can be proved in a similar way.

Let $\varphi \in \lang_\logicext$.
We will prove that $\varphi $
is satisfiable relative to 
$ \ucontext $
iff $\varphi$
is satisfiable relative to
the class $\mathbf{CBS}$.

We build the multi-relational Kripke model $M=(W,  \relstate{1}, \ldots ,  \relstate{n},  \relstate{1}^c, \ldots ,  \relstate{n}^c, \omega )$
corresponding to the $\PROP(\varphi)$-restriction of the universal context $ \ucontext $ as follows:
\begin{itemize}
\item $W= \{ B' \in  \ucontext  \suchthat  \forall q \in \PROP\setminus \PROP(\varphi)
\text{ and } \forall k \geq 0:(B',  \ucontext) \models   \mathsf{MB}^{k} \neg q \}  $,
\item for all $i \in \AGT$ and for all ${B'},{B''} \in W$, ${B'}  \relstate{i}^c {B''}$ iff  $B''  \not \in \relstate{i} (B')$,
\item for all $p \in \ATM$, $\omega(p) = \{B' \in W \suchthat  B' \models p \}$,
\end{itemize}
where the $k$-level
mutual belief operator $ \mathsf{MB}^{k}$
is defined as in the proof of Theorem  \ref{teoremone} (Section \ref{appsec}).
In what follows, for notational convenience we denote elements of $W$
by $w, v , u , \ldots$
Interpretation of $\lang_\logicext$-formulas
relative to $M$
and to a world $w \in W$
is defined as in Section \ref{Kripke}
for boolean formulas and for formula $\impbel{i} \varphi $.
We add the following clause for the $\complbel{i}$-operator:
\begin{align*}
(M,w) \models  \complbel{i} \varphi  & \text{ iff }  \forall v \not  \in  \relstate{i}(w):  (M,v)  \models \varphi
\end{align*}
We leave to the reader the task of checking that
 $\varphi$
is satisfiable relative to 
$ \ucontext $ iff $M$ satisfies $\varphi$.


Similarly, we build the multi-relational Kripke model $M'=(W', \mathcal{T}_1, \ldots , \mathcal{T}_n, \mathcal{T}_1^c, \ldots , \mathcal{T}_n^c, \omega' )$
corresponding to the $\PROP(\varphi)$-restriction of the set of coherent belief structures  
as follows:
\begin{itemize}
\item $W'=   \{ f \in  \mathbf{CBS} \suchthat  \forall q \in \PROP\setminus \PROP(\varphi),
f_0(q)=0
\text{ and if } g_0(q)=1 \text{ then } f_k(i) (g_0, \ldots, g_{k-1})=0, \forall k \geq 0, \forall i \in \AGT, \forall g \in \mathbf{CBS} \} $,
\item for all $i \in \AGT$ and for all ${f},{g} \in W'$, $f   \mathcal{T}_i g$ iff \\
$f_{k}(i)(g_0, \ldots, g_{k-1})=1$  for all $k > 1$,
\item for all $i \in \AGT$ and for all ${f},{g} \in W'$, $f  \mathcal{T}_i^c g$ iff  $g \not \in   \mathcal{T}_i (f)$,
\item for all  $p \in \ATM$, $\omega'(p) = \{f  \in W' \suchthat  f_0(p)=1 \}$.
\end{itemize}
Interpretation of $\lang_\logicext$-formulas
relative to $M'$
and to a world $f \in W'$
is defined as usual.
We have the following truth conditions for the
 operators $\impbel{i}$
and  $ \complbel{i}$:
\begin{align*}
(M,w) \models  \impbel{i} \varphi  & \text{ iff }  \forall v \in W: \text{ if } w   \mathcal{T}_i v  \text{ then } (M,v)  \models \varphi\\
(M,w) \models  \complbel{i} \varphi  & \text{ iff }  \forall v \in W: \text{ if } w   \mathcal{T}_i^c v  \text{ then } (M,v)  \models \varphi
\end{align*}
We leave to the reader the task of checking that
$\varphi$
is satisfiable relative to 
$  \mathbf{CBS} $ iff $M'$ satisfies $\varphi$.

For every world $w\in W$ in the Kripke model $M$,
we build a coherent belief structure $f^w = (f_0^w, f_1^w, \ldots) $. We define $f_0^w$ to be the function such that, for every $p\in \PROP$,  $f_0^w(p)=1 $
iff $w\in \omega(p)$. Moreover, suppose $f_0^w, \ldots, f_k^w$ have been defined
for each $w \in W$. Then,
we define $f_{k+1}^w$
to be the function such that, 
 for every $i \in \AGT$,
$f_{k+1}^w(i)^{-1} (\{1\})= \{  (g_0^v, \ldots, g_k^v) \suchthat w  \Rightarrow_i  v \}$
where $f_{k+1}^w(i)^{-1} (\{1\})$
is the inverse image  by $f_{k+1}^w(i)$ of the subset $\{1\}$ of the codomain $ \{0,1\}$.

Let us define the mapping $\tau: w \mapsto f^w$
from $W$ to $W'$.
The following intermediate proposition
will be useful for the rest of the proof.
\begin{proposition}\label{intermprop}
Let $f \in W'$. Then, there exists $v \in W$
such that $\tau(v)=f$.
\end{proposition}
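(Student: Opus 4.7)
My plan is to construct, for every coherent belief structure $g \in W'$, a multi-agent belief base $B^g \in W$ whose belief hierarchy equals $g$, and then take $v = B^f$. The construction will exploit the fact that $W'$ restricts attention to belief structures grounded in the finite set $\PROP(\varphi)$: for every $k \geq 0$, the collection of $(k{+}1)$-worlds $(h_0, \ldots, h_k)$ that can arise in any $g \in W'$ is finite, which is what will make the construction finitary at each depth.

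First I would define by induction on $k$ finite labelling formulas $\psi_k(h_0, \ldots, h_k) \in \langminus$. The base case $\psi_0(h_0)$ would be the finite conjunction of literals over $\PROP(\varphi)$ determined by $h_0$. The inductive step would be
\[
\psi_{k+1}(h_0, \ldots, h_{k+1}) \;=\; \psi_k(h_0, \ldots, h_k) \,\wedge\, \bigwedge_{i \in \AGT} \expbel{i}\Bigl(\,\bigvee_{(u_0, \ldots, u_k):\, h_{k+1}(i)(u_0, \ldots, u_k) = 1} \psi_k(u_0, \ldots, u_k)\,\Bigr),
\]
which, through the explicit-belief operator $\expbel{i}$, prescribes exactly which ``next-level'' disjunction must lie in the $i$-th belief base. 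Then, for every $g \in W'$, I would set $\states^g = \{p \in \PROP : g_0(p) = 1\}$ and
\[
B^g_i \;=\; \Bigl\{\, \bigvee_{(h_0, \ldots, h_k):\, g_{k+1}(i)(h_0, \ldots, h_k) = 1} \psi_k(h_0, \ldots, h_k) \;:\; k \geq 0 \Bigr\}.
\]
Because the atoms appearing positively in any $\psi_0$ all come from $\PROP(\varphi)$, each $B^g$ will lie in $W$.

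The verification $\tau(B^g) = g$ would proceed by induction on depth $k$. For the ``$\Leftarrow$'' half at level $k{+}1$ (every intended $(h_0, \ldots, h_k)$ must be the $\tau$-image of some $i$-alternative of $B^g$), I would iteratively apply coherence condition (ii) of Definition \ref{CoherCond} to extend $(h_0, \ldots, h_k)$ to a full $h \in W'$ with $g_{j+1}(i)(h_0, \ldots, h_j) = 1$ for all $j \geq 0$; the inductive hypothesis would then give $(f_0^{B^h}, \ldots, f_k^{B^h}) = (h_0, \ldots, h_k)$, and a direct check shows that $B^h$ satisfies every disjunction in $B^g_i$ by virtue of its own $B^h_{i'}$ containing the $\beta$-disjunction indexed by $(h_0, \ldots, h_j)$. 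For the ``$\Rightarrow$'' half I would use a simultaneous claim that the labels $\psi_k$ faithfully pin down the first $k{+}1$ components of the hierarchy of any MBB in the family $\{B^h : h \in W'\}$: any $B \in \ucontext$ satisfying every $\beta$-disjunction in $B^g_i$ then has first $k{+}1$ components matching some intended tuple, whence $g_{k+1}(i)(f_0^B, \ldots, f_k^B) = 1$. Setting $v = B^f$ yields $\tau(v) = f$.

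The hard part will be establishing the faithfulness of the labels $\psi_k$ for the ``$\Rightarrow$'' direction. Since $\langminus$ lets one talk about explicit-belief content through $\expbel{i}$ but not about the implicit accessibility relation $\relstate{i}$, no finite $\psi_k$ can in isolation pin down the hierarchy of an arbitrary MBB; the fix will be to carry out the construction of all the $B^h$'s and the characterisation by the $\psi_k$'s in tandem, so that the disjunctions in $B^g_i$ and the disjunctions nested inside each $\psi_k$ align exactly and exclude spurious alternatives. A secondary subtlety is checking that the coherence-driven extension of $(h_0, \ldots, h_k)$ to $h$ yields a coherent element of $W'$: one would iteratively apply coherence condition (ii), padding $h_{j+1}(i')$ with zero values on tuples not visited by the extension, and then verify that the resulting $h$ satisfies both coherence conditions of Definition \ref{CoherCond} together with the $\PROP(\varphi)$-restriction defining $W'$.
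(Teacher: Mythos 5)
Your construction is essentially the paper's: the paper builds $B^f$ by placing in each $B_i^f$ one formula per depth $k$, namely the disjunction $\beta_{i,(f_0,\ldots,f_k)}$ of finitary descriptions of the $k$-worlds that $f_k(i)$ maps to $1$, which matches your $\psi_k$-disjunctions up to bookkeeping. There is, however, one step that fails as you state it: the claim that each $B^g$ lies in $W$ ``because the atoms appearing positively in any $\psi_0$ all come from $\PROP(\varphi)$''. Membership in $W$ requires $(B^g,\ucontext)\models \mathsf{MB}^{k}\neg q$ for every $k\geq 0$ and every $q\in\PROP\setminus\PROP(\varphi)$, and since $\mathsf{EB}$ is built from the \emph{explicit}-belief operator $\expbel{i}$ (a purely syntactic membership test), already $\mathsf{MB}^{1}\neg q$ forces the formula $\neg q$ itself to be an element of every $B^g_i$, and higher levels force the iterated mutual-belief formulas to be present as well. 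Your $B^g_i$ contains only the hierarchy disjunctions, so $B^g\notin W$. The paper repairs this by adjoining $\bigcup_{k\geq 0,\ q\in\PROP\setminus\PROP(\varphi)}\{\mathsf{MB}^{k}\neg q\}$ to each $B^f_i$; you need the same addition.

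On the part you yourself flag as hard --- showing that no $\relstate{i}$-alternative of $B^g$ in $W$ realizes a prefix outside $g_{k+1}(i)^{-1}(\{1\})$ --- your proposal remains a statement of intent (``carry out the construction in tandem'') rather than an argument, and the difficulty is genuine: any $B\in W$ accessible from $B^g$ must satisfy each disjunction in $B^g_i$, but nothing prevents its belief bases from containing \emph{further} formulas that prune its own set of alternatives, so that $(f_0^{B},\ldots,f_k^{B})$ need not coincide with any intended tuple; the positive labels $\psi_k$ only bound the explicit content from below, hence the induced set of epistemic alternatives from above. To be fair, the paper's own proof does not discharge this step either --- it simply asserts that $B^f$ and $f$ ``define the same belief hierarchy'' --- so you have correctly located the load-bearing claim; but neither your tandem-alignment idea nor the paper's assertion currently yields the inclusion $\tau(B^f)_{k+1}(i)^{-1}(\{1\})\subseteq f_{k+1}(i)^{-1}(\{1\})$.
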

\begin{proof}

Let 
$f= (f_0, f_1, \ldots) \in W'$.
We show how to construct $v \in W$
such that $\tau(v)=f$.

Let us define:
\begin{align*}
\beta_{i, (f_0, f_1)} \eqdef \bigvee_{ g_0 \in Z_0 \suchthat f_1(i) (g_0)=1}
g_0(\varphi)
\end{align*}
where
\begin{align*}
g_0(\varphi)\eqdef \bigwedge_{ p\in \ATM(\varphi) \suchthat g_0(p)=1} p
\wedge  \bigwedge_{ p\in \ATM(\varphi)  \suchthat g_0(p)=0} \neg p.
\end{align*}
 Moreover, for every $k > 1$,
 let us define:
 \begin{align*}
\beta_{i, (f_0, \ldots, f_k)} \eqdef & \bigvee_{ (g_0, \ldots , g_{k-1}) \in Z_{k-1} \suchthat f_k(i) (g_0, \ldots , g_{k-1})=1}
\big( g_0(\varphi) \wedge \\
&\bigwedge_{j \in \AGT} \expbel{j} \beta_{j, (g_0, \ldots, g_{k-1})} \big).
\end{align*}
Note that $\beta_{i, (f_0, f_1)}$ and every $\beta_{i, (f_0, \ldots, f_k)}$
are (finitary) formulas of $\langminus$.

We define the multi-agent belief base
 $B^f= (B_{1}^f, \ldots, B_{n}^f, \states^f) $
as follows:
 \begin{align*}
B_{i}^f & = \bigcup_{k \geq 1 }\{\beta_{i, (f_0, \ldots, f_k)}\} \cup \bigcup_{k \geq 0, q \in \PROP \setminus \PROP(\varphi) }\{\mathsf{MB}^{k} \neg q \}  ,\\
\states^f &= \{p \in \ATM \suchthat f_0(p)=1 \}.
\end{align*}
Clearly, $ B^f$
belongs to $W'$.
Moreover,
it is the case that 
the multi-agent belief base
$B_{i}^f$
and the belief structure $f$ define the same belief hierarchy.
Specifically, we have $\tau(B^f)=f $.
 \end{proof}
By means of  Proposition \ref{intermprop},
we can show that 
the relation $ \{(w, f) \in W \times W' \suchthat \tau(w) =f \}  $
is a bisimulation
between the Kripke models $M$
and $M'$.
Therefore,
suppose $(M, v) \models \varphi$.
It follows that $(M' , \tau(v)) \models \varphi $.
Viceversa, suppose $(M' , f) \models \varphi $.
Then, by Proposition  \ref{intermprop},
there exists $v \in W$
such that $\tau(v) =f$.
Since $v$
and $f$
are bisimilar, we have $(M, v) \models \varphi$.
 \end{proof}

\subsection{Proof of Lemma \ref{intermediatelemma2}}

\begin{proof}

In order to prove Lemma \ref{intermediatelemma2},
we first prove the following intermediate result.

\begin{lemma}\label{intermediatelemma1}
Let $\states \subseteq \PROP$
and
let
$\chi=\lambda. \pi$ be a quantified boolean formula.
Moreover,
let $B' = (B_1',  \states'  ), B''\\ =(B_1'',  \states''  ) \in \classbelbasesimple$
such that
$\big( \states \cap \PROP(\pi) \big) =  \big( \states'  \cap \PROP(\pi) \big)= \big( \states''  \cap \PROP(\pi) \big) $,
 $B_1^{(\chi,\states)} \subseteq B_1' $,
  $B_1^{(\chi,\states)} \subseteq B_1'' $,
$(B', \ucontext ) \models \uncertain{1}\lambda[0]$
and $(B'', \ucontext ) \models \uncertain{1}\lambda[0]$.
  Then, $(B', \ucontext ) \models \mathit{tr}( \chi) $
  iff $(B'', \ucontext ) \models \mathit{tr}( \chi) $.
\end{lemma}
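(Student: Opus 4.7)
I will prove the lemma by induction on the length of the quantifier prefix $\lambda$. The base case, where $\lambda$ is empty and hence $\chi = \pi$ is purely propositional, is immediate: $\mathit{tr}(\chi) = \pi$, and the satisfaction of a propositional formula at an MBM depends only on the state restricted to $\PROP(\pi)$, which by hypothesis coincides at $B'$ and $B''$.

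For the inductive step, write $\chi = Q_0 p_0 \lambda'.\pi$ with $\lambda'$ of strictly smaller length; I treat the case $Q_0 = \forall$ explicitly, the case $Q_0 = \exists$ being dual. Unfolding the translation yields $\mathit{tr}(\chi) = \impbel{1}\big(\uncertain{1}\lambda'[0] \rightarrow \mathit{tr}(\lambda'.\pi)\big)$. By symmetry between $B'$ and $B''$ it suffices to transfer a witness of non-satisfaction in one direction: from any $C \in \ucontext$ with $B' \relstate{1} C$, $(C, \ucontext) \models \uncertain{1}\lambda'[0]$ and $(C, \ucontext) \not\models \mathit{tr}(\lambda'.\pi)$, construct a counterpart $D \in \ucontext$ with $B'' \relstate{1} D$ exhibiting the same two features. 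I build $D$ so that the state $\states^D$ agrees with $\states^C$ on $\PROP(\pi)$ and the belief base $B_1^D$ contains (i) the shifted skeleton $B_1^{(\lambda'.\pi,\; \states^C \cap \PROP(\pi))}$ and (ii) enough further entries to make $D$ a $\relstate{1}$-alternative of $B''$. After a mild rectification of $C$ that adjoins the atomic part of the shifted skeleton to $B_1^C$ (which is consistent because that atomic part matches $\states^C$ on $\PROP(\pi)\setminus\{p_0\}$, forced by $B_1^{(\chi,\states)}\subseteq B_1'$), the inductive hypothesis applied to the resulting pair for the shorter prefix $\lambda'$ transfers the failure of $\mathit{tr}(\lambda'.\pi)$ from $C$ to $D$.

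The main obstacle is the construction of $D$ inside $\ucontext$: the belief base $B_1''$ may carry arbitrary extra formulas beyond $B_1^{(\chi,\states)}$, which could in principle filter out every candidate alternative of the required shape. The hypothesis $(B'', \ucontext) \models \uncertain{1}\lambda[0]$ is precisely what rescues the construction, because it forbids any formula of $B_1''$ from pinning down the value of $\lambda[0] = p_0$ on alternatives of $B''$; combined with the unrestricted freedom of the universal context to populate belief bases at depth $\geq 2$ with arbitrary content, this allows the extras of $B_1''$ to be absorbed into entries of $B_1^D$ without disturbing either the shifted skeleton or the uncertainty $\uncertain{1}\lambda'[0]$. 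A subsidiary technicality is that belief bases in $\ucontext$ need not be correct, so $B_1^C$ may hold an opinion on some $q \in \PROP(\pi) \setminus \{\lambda'[0]\}$ that disagrees with $\states^C$; this is precisely the point at which the rectification from $C$ to the modified witness is invoked, using the fact that adjoining to $B_1^C$ literals that are already true in $\states^C$ preserves both the accessibility relation from $B'$ and the failure of $\mathit{tr}(\lambda'.\pi)$ at the witness.
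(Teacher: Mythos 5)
Your overall strategy coincides with the paper's: induction on the length of the prefix, the same base case via agreement of the states on $\PROP(\pi)$, and an inductive step that pairs up the $\uncertain{1}\lambda'[0]$-guarded $\relstate{1}$-alternatives of $B'$ and $B''$ so that the inductive hypothesis for $\lambda'.\pi$ (with base state $\states\cup\{\lambda[0]\}$ or $\states\setminus\{\lambda[0]\}$) can transfer (non-)satisfaction of $\mathit{tr}(\lambda'.\pi)$. However, the two moves on which your inductive step actually rests do not go through as described. First, the ``rectification'' of the witness $C$: formulas adjoined to $B_1^C$ constrain the $\relstate{1}$-\emph{alternatives} of $C$, not $C$ itself, so the observation that the adjoined literals are true in $\states^C$ addresses the wrong model. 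Since bases in $\ucontext$ need not be correct, $B_1^C$ may already contain $\neg q$ for a $q$ you adjoin --- exactly the situation you say the rectification is meant to handle --- and then the rectified base is propositionally inconsistent, has no $\relstate{1}$-alternatives, and satisfies the $\impbel{1}$-guarded formula $\mathit{tr}(\lambda'.\pi)$ vacuously; more generally, enlarging $B_1^C$ shrinks $\relstate{1}(C)$ and can only make a box formula easier to satisfy, so the failure you need to preserve may be destroyed. Second, the claim that the extra formulas of $B_1''$ can be ``absorbed'' into $B_1^D$ is not delivered by the hypothesis $(B'',\ucontext)\models\uncertain{1}\lambda[0]$: that hypothesis only prevents $B_1''$ from fixing the value of $\lambda[0]$ at the alternatives of $B''$, and says nothing about, for instance, a formula $\triangle_{1}\lambda'[0]\in B_1''$, which would force $\lambda'[0]\in B_1^D$ for every alternative $D$ of $B''$ and hence falsify $\uncertain{1}\lambda'[0]$ at every candidate $D$. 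So the existence of your counterpart $D$ is asserted rather than established.

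The paper's proof avoids both issues by constructing nothing and modifying nothing. It introduces the guarded relation $\mathcal{R}_{[\![\lambda']\!]}$ and argues that, because $B_1^{(\chi,\states)}$ (contained in both $B_1'$ and $B_1''$) comprises the literals over $\PROP(\pi)\setminus\{\lambda[0]\}$ together with the nested formulas $\expbel{1}^{k-1}\bigwedge_{q}(\triangle_{1}q\vee\triangle_{1}\neg q)$, \emph{every} element of $\mathcal{R}_{[\![\lambda']\!]}(B')$ and of $\mathcal{R}_{[\![\lambda']\!]}(B'')$ already satisfies the preconditions of the inductive hypothesis for the shifted skeleton; the uncertainty hypotheses are used only to guarantee that each element of one family has a partner in the other with the same value of $\lambda[0]$ in its state. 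If you want to keep the witness-transfer phrasing, you should replace the rectification and absorption steps by this observation, i.e.\ show that the witness $C$ and an arbitrary suitably chosen element of $\mathcal{R}_{[\![\lambda']\!]}(B'')$ already form a pair to which the inductive hypothesis applies.
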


\begin{proof}
Let
$B' = (B_1',  \states'  )$
and $B'' = (B_1'',  \states''  )$.
We assume
$\big(\states \cap \PROP(\pi) \big)= \big( \states'  \cap \PROP(\pi) \big)= \big( \states'' \cap \PROP(\pi)\big)$,
 $B_1^{(\chi,\states)} \subseteq B_1' $,
  $B_1^{(\chi,\states)} \subseteq B_1''$,
$(B', \ucontext ) \models \uncertain{1}\lambda[0]$
and
$(B'', \ucontext ) \models \uncertain{1}\lambda[0]$.

The proof of the lemma is by induction on the length of the formula
$\chi$.

\paragraph{Base case}
Let  $\lambda$
be the empty sequence.
Clearly,
$B' \models \pi$
  iff $B'' \models \pi $,
  since $\big(\states' \cap \PROP(\pi) \big)= \big( \states''  \cap \PROP(\pi) \big)  $.
  Therefore,
$\big(B', \ucontext \big) \models \pi $
  iff $\big(B'', \ucontext \big) \models \pi $.

\paragraph{Inductive case}
 We prove that the statement
 is true for sequence $\lambda$ of length $k+1$, if
 we suppose it is true
for sequence $\lambda$ of length $k$.
Suppose $\lambda$
has length $k+1$.
Therefore,
we can assume that $\lambda = Q p \lambda'$.
We assume  $Q = \forall$.
The proof for $Q = \exists$ is analogous.

Let
\begin{align*}
\mathcal{R}_{[\![\lambda']\!]} =  \{ (B''',B'''') \in \relstate{1}
 \suchthat (B'''',  \ucontext) \models    \uncertain{1}\lambda'[0] \}.
\end{align*}
Clearly, for all $B''' \in   \ucontext$,
we have
\begin{align*}
 (B''',  \ucontext) \models  [\![\lambda']\!] \psi  \text{ iff }  (B'''',  \ucontext) \models   \psi  \text{ for all }  B'''' \in \mathcal{R}_{[\![\lambda']\!]} (B''').
\end{align*}

Now,
suppose $(B', \ucontext ) \models \mathit{tr}( \forall p \lambda'. \pi ) $.
The latter is equivalent to
$(B', \ucontext ) \models [\![\lambda']\!] \mathit{tr}(\lambda'. \pi)$.
The latter is equivalent to
\begin{align*}
& (B''',  \ucontext) \models   \mathit{tr}(\lambda'. \pi)  \text{ for all }  B''' \in \mathcal{R}_{[\![\lambda']\!]} (B').
\end{align*}
For all $B''' \in \big( \mathcal{R}_{[\![\lambda']\!]} (B') \cup \mathcal{R}_{[\![\lambda']\!]} (B'') \big)$,
we clearly have:
\begin{align*}
& (B''',  \ucontext) \models  \uncertain{1}\lambda'[0] .
\end{align*}
since $B_1^{(\chi,\states)} \subseteq B_1' $ and
  $B_1^{(\chi,\states)} \subseteq B_1''$.
Moreover,
by the initial assumption that
$\big(\states \cap \PROP(\pi) \big)=\big( \states'  \cap \PROP(\pi)\big)=\big(  \states'' \cap \PROP(\pi)\big)$,
 $B_1^{(\chi,\states)} \subseteq B_1' $ and
  $B_1^{(\chi,\states)} \subseteq B_1''$,
for all $B''' \in  \big( \mathcal{R}_{[\![\lambda']\!]} (B') \cup \mathcal{R}_{[\![\lambda']\!]} (B'') \big)$
we have:
\begin{align*}
& B_1^{(\lambda'. \pi,\states \cup \{p \}) }\subseteq B_1''' ;\\
& B_1^{(\lambda'. \pi,\states \setminus \{p \})} \subseteq B_1''' ;\\
&  \big( (\states \cup \{p \}) \cap \PROP(\pi) \big) = \big( \states'''  \cap \PROP(\pi) \big) \text{ or } \\
&  \big( (\states \setminus \{p \}) \cap \PROP(\pi)  \big)= \big( \states'''  \cap \PROP(\pi)\big).
\end{align*}

Furthermore, by  the fact that
$(B', \ucontext ) \models \uncertain{1}p$
and
$(B'', \ucontext ) \models \uncertain{1}p $,
we have that:
\begin{align*}
& \forall  B'''' \in  \mathcal{R}_{[\![\lambda']\!]} (B''), \exists B''' \in \mathcal{R}_{[\![\lambda']\!]} (B') \\
& \text{such that }
\big( \states''''\cap \PROP(\pi) \big)=\big( \states'''  \cap \PROP(\pi)\big).
\end{align*}

Therefore, by induction hypothesis and the fact that
$(B''',  \ucontext) \models   \mathit{tr}(\lambda'. \pi)  \text{ for all }  B''' \in \mathcal{R}_{[\![\lambda']\!]} (B') $,
we have:
\begin{align*}
& (B'''',  \ucontext) \models   \mathit{tr}(\lambda'. \pi)  \text{ for all }  B'''' \in \mathcal{R}_{[\![\lambda']\!]} (B'').
\end{align*}

The latter is equivalent to
$(B'', \ucontext ) \models [\![\lambda']\!] \mathit{tr}(\lambda'. \pi)$
which  is equivalent to
$(B'', \ucontext ) \models \mathit{tr}( Q p \lambda'. \pi ) $.

In an analogous way,
we can prove that
$(B'', \ucontext ) \models \mathit{tr}( Q p \lambda'. \pi ) $
implies
$(B', \ucontext ) \models \mathit{tr}( Q p \lambda'. \pi ) $.
\end{proof}

We can now go back to the statement of Lemma \ref{intermediatelemma2}
and prove it.
The proof of  the lemma is by induction on the length of the formula
$\lambda$.
\paragraph{Base case}
Suppose $\lambda $
is the empty sequence
and $\states \models  \pi $.
By construction of $B^{(\chi,\states)}$,
the latter is equivalent to
$ B^{(\chi,\states)} \models \pi$.
The latter is equivalent to $\big(  B^{(\chi,\states)},      \ucontext  \big)\models \pi$.

 \paragraph{Inductive case}
 We prove that the statement
 is true for sequence $\lambda$ of length $k+1$, if
 we suppose it is true
for sequence $\lambda$ of length $k$.
Suppose $\lambda$
has length $k+1$.
Therefore, $\chi$
can be written as $ Q p  \lambda'. \pi $.
We assume  $Q = \exists$,
as the proof for $Q = \forall$ is analogous.

($\Rightarrow$)
 Suppose $\states \models \exists p \lambda'. \pi $.
 The latter is equivalent to saying that
$\states \cup \{p \} \models   \lambda'. \pi$ or $ \states \setminus \{p \} \models  \lambda'. \pi$.

By induction hypothesis,
the latter implies that
\begin{align*}
&\big( B^{( \lambda'. \pi,\states \cup \{p \})}, \ucontext \big) \models \mathit{tr}( \lambda'. \pi), \text{or}\\
&\big( B^{( \lambda'. \pi,\states \setminus \{p \})}, \ucontext \big) \models \mathit{tr}( \lambda'. \pi).
\end{align*}
It is easy to check that
\begin{align*}
& B^{( \chi,\states )}  \ \relstate{1}  \ B^{( \lambda'. \pi,\states \cup \{p \})},  \text{ and } \\
 & B^{( \chi,\states )} \ \relstate{1}  \ B^{( \lambda'. \pi,\states \setminus \{p \})}.
\end{align*}
Moreover,
\begin{align*}
& \big(B^{( \lambda'. \pi,\states \cup \{p \})}, \ucontext \big) \models    \uncertain{1}\lambda'[0]    ,  \text{ and } \\
& \big(B^{( \lambda'. \pi,\states \setminus \{p \})}, \ucontext \big) \models    \uncertain{1}\lambda'[0] .
\end{align*}
  Therefore,
  \begin{align*}
  \big(B^{( \chi,\states  )}, \ucontext \big) \models   \langle\!\langle\lambda'\rangle\!\rangle
  \mathit{tr} ( \lambda'. \pi) .
  \end{align*}
  The latter is equivalent to   $\big(B^{( \chi,\states)}, \ucontext \big) \models   \mathit{tr} ( \chi) $.

($\Leftarrow$) Suppose
 $\big(B^{( \chi,\states)}, \ucontext \big) \models   \mathit{tr} ( \exists p \lambda'. \pi) $.
 Hence,
  $\big(B^{( \chi,\states)}, \ucontext \big) \models   \langle\!\langle\lambda'\rangle\!\rangle
  \mathit{tr} ( \lambda'. \pi)$.
  The latter implies that
    \begin{align*}
    \exists B' \in  \mathcal{R}_{[\![\lambda']\!]} \big( B^{( \chi,\states)} \big) \text{ such that }
   \big(B', \ucontext \big) \models \mathit{tr} ( \lambda'. \pi)
      \end{align*}
      where $\mathcal{R}_{[\![\lambda']\!]}$
      is defined as in the proof of Lemma \ref{intermediatelemma1}.

  For all
      $ B' \in  \mathcal{R}_{[\![\lambda']\!]} \big( B^{( \chi,\states)} \big)$,
we clearly have:
\begin{align*}
& (B',  \ucontext) \models  \uncertain{1}\lambda'[0] .
\end{align*}
Moreover,
      for all
      $ B' \in  \mathcal{R}_{[\![\lambda']\!]} \big( B^{( \chi,\states)} \big)$,
      we have:
\begin{align*}
& B_1^{(\lambda'. \pi,\states \cup \{p \})} \subseteq B_1' ;\\
& B_1^{(\lambda'. \pi,\states \setminus \{p \})} \subseteq B_1';\\
&\big( (\states \cup \{p \}) \cap \PROP(\pi) \big) = \big( \states' \cap \PROP(\pi)  \big) \text{ or } \\
&  \big( (\states \setminus \{p \}) \cap \PROP(\pi) \big) = \big( \states'  \cap \PROP(\pi) \big).
\end{align*}
Finally, we have:
\begin{align*}
&  B^{(\lambda'. \pi,\states \cup \{p \})} \in  \mathcal{R}_{[\![\lambda']\!]} \big( B^{( \chi,\states) }\big)
\text{ and }
 B^{(\lambda'. \pi,\states \setminus \{p \})} \in  \mathcal{R}_{[\![\lambda']\!]} \big( B^{( \chi,\states) }\big).
\end{align*}
Thus, by Lemma \ref{intermediatelemma1}
and the fact that there exists
$ B' \in  \mathcal{R}_{[\![\lambda']\!]} \big( B^{( \chi,\states)} \big) $ such that
   $\big(B', \ucontext \big) \models \mathit{tr} ( \lambda'. \pi)$,
we have:
\begin{align*}
& \big(B^{(\lambda'. \pi,\states \cup \{p \})},  \ucontext\big) \models  \mathit{tr} ( \lambda'. \pi) \text{, or } \\
& \big(B^{(\lambda'. \pi,\states \setminus \{p \})},  \ucontext\big) \models \mathit{tr} ( \lambda'. \pi).
\end{align*}
  By induction hypothesis,
  the latter implies that
  $\states \cup \{p \} \models  \lambda'. \pi$ or $\states \setminus \{p \} \models  \lambda'. \pi$.
  The latter is equivalent to $\states \models \exists p  \lambda'. \pi $.
\end{proof}

\end{document}